\newtheorem{definition}{Definition}
\newtheorem{theorem}{Theorem}
\newtheorem{property}{Property}
\newcommand{\eref}[1]{(\ref{#1})}
\newcommand{\sref}[1]{Section~\ref{#1}}
\newcommand{\cref}[1]{Constraint~\ref{#1}}
\newcommand{\thref}[1]{Theorem~\ref{#1}}
\newcommand{\tref}[1]{Table~\ref{#1}}
\newcommand{\algref}[1]{Algorithm~\ref{#1}}
\newcommand{\ignore}[1]{}
\begin{document}

\title{\vspace{-.8cm} Coalition Formation Game for Delay Reduction in Instantly Decodable Network Coding Assisted D2D Communications}

\author{
 \IEEEauthorblockN{Mohammed S. Al-Abiad, \textit{Student Member, IEEE}, Ahmed Douik, \textit{Student Member, IEEE}, and Md. Jahangir Hossain, \textit{Senior Member, IEEE} \vspace{-1.9cm}}

\thanks {
Mohammed S. Al-Abiad and Md. Jahangir Hossain are with the School of Engineering, The University of British Columbia, Kelowna, BC V1V 1V7, Canada (e-mail: m.saif@alumni.ubc.ca, jahangir.hossain@ubc.ca).

Ahmed Douik is with the Department of Electrical Engineering, California Institute of Technology, Pasadena, CA 91125 USA (e-mail: ahmed.douik@caltech.edu).
}}

\maketitle

\begin{abstract}
Consider a wireless broadcast device-to-device (D2D) network wherein users' devices are interested in receiving some popular files. Each user's device possesses part of the content which is acquired in previous transmissions and cooperates with others to recover the missing packets by exchanging Instantly Decodable Network Coding (IDNC) packets.\ignore{Previous works suggest reducing the communication time by assuming the existence of a central controller.} Recently, a distributed solution, relying on a non-cooperative game-theoretic formulation, has been proposed to reduce the communication time for fully connected D2D networks, i.e., single-hop D2D networks. In this paper, we develop a distributed game-theoretical solution to reduce the communication time for a more realistic scenario of a \textit{decentralized} and \textit{partially} connected, i.e., multi-hop, IDNC-enabled D2D network. The problem is modeled as a coalition game with cooperative-players  wherein the payoff function is derived so that decreasing individual payoff results in the desired cooperative behavior. Given the intractability of the formulation, the coalition game is relaxed to a coalition formation game (CFG) involving the formation of disjoint coalitions. A distributed algorithm relying on merge-and-split rules is designed for solving the relaxed problem. The effectiveness of the proposed solution is validated through extensive numerical comparisons against existing methods in the literature. \vspace{-.6cm}
\end{abstract}

\begin{IEEEkeywords}
\vspace{-.5cm}Coalition game, device-to-device networks, instantly decodable network coding, multimedia streaming, real-time applications.\vspace{-.2cm}
\end{IEEEkeywords}

\section{Introduction} \label{sec:I}
\vspace{-.3cm}\IEEEPARstart{T}{he} use of smartphones and data-hungry applications in radio access networks are increasing dramatically worldwide\ignore{\cite{1}}. This growth impacts the ability of traditional wireless networks to meet the required Quality-of-Service (QoS) for its users. Device-to-device (D2D) communication has been proposed as a candidate technology \cite{3a,2} to support a massive number of connected devices and possibly improve the data rate of the next-generation mobile networks \cite{2a,27}. The decentralized nature of D2D networks allows devices to communicate with other nearby devices over short-range and possibly more reliable links which is suitable for numerous applications in mobile networks. For example, in wireless cellular networks, D2D system enables mobile traffic offloading by user cooperations for content downloading and sharing. Using conventional centralized Point-to-Multi-Point (PMP) networks, e.g., cellular, Wi-Fi, and fog/cloud radio access networks (FRAN/CRAN), for content delivery would be excessively complicated and expensive.\ignore{ Because systems comprising a massive number of devices require a large number of access points. These access points are connected to a backhaul network and require handover algorithms to support mobility.
In other practical applications, e.g., ad-hoc networks, such as wireless sensor networks (WSNs), D2D systems enable fast and reliable data communications for route and topology discovery as well as sending control/emergency packets.
\ignore{To overcome the backhaul and mobility challenges, D2D  systems have become a critical technology in the last few years. Indeed, D2D systems enable automated vehicles to communicate with nearby cars, pedestrians, and roadside devices without relying on a central and error-prone controller \cite{5a}.} }

Wireless channels are prone to interference and fading which result in packet/data loss at the application level. A widely used algorithm for packet recovery problem is the Automatic Repeat reQuest (ARQ)\ignore{\cite{6a}}.\ignore{This simple algorithm relies on negative acknowledgments (NACK) for the targeted devices to declare an erasure and reattempt the transmission.} However, this simple algorithm is highly inefficient for broadcast applications. For example, consider that a base-station (BS) is required to deliver the set of packets $\{p_1, p_2, p_3\}$ to users $\{u_1, u_2, u_3\}$. Assume that after sequentially transmitting $\{p_1, p_2, p_3\}$, user $u_i$ is still missing packet $p_i$ for $1 \leq i \leq 3$. To complete the reception of all packets for all users, the BS needs at least $3$ uncoded transmissions. However, by using an erasure code, the BS can broadcast the binary XOR combination $p_1\oplus p_2\oplus p_3$ that requires a single transmission.

Different erasure codes have been proposed for various applications and diverse network settings to solve the packet recovery problem. For the aforementioned PMP wireless broadcast networks, Raptor codes \cite{RC}, and Random Linear Network Codes (RLNC) \cite{RLNC} achieve maximum network throughput. Despite being efficient and offering a low-complexity solution, Raptor codes and RLNC are not attractive techniques for real-time applications, such as video streaming, online gaming, and teleconferencing. These codes accumulate a substantial decoding delay, meaning that these codes do not allow progressive decoding. In particular, coded packets cannot be decoded to retrieve the original data until a large number of independent transmissions are received.

Instantly Decodable Network Coding (IDNC) has been proposed as a low-complexity solution to improve throughput while allowing progressive decoding of the received packets \cite{IDNC}. By relying solely on binary XOR operations, IDNC ensures fast and instantaneous decodability of the transmitted packets for their intended users. Therefore, IDNC has been the topic of extensive research, e.g., \cite{12m,13m,14m, 15mm, 16m}. It has been applied in several real-time broadcast applications wherein received packets need to be used at the application layer immediately to maintain a high QoS, e.g., relay-aided networks \cite{17m,18m} video-on-demand and multimedia streaming \cite{20m,23m, 24m, 25m}, and D2D-enabled systems \cite{26m,28m,29m,30m}. The potential of IDNC technique is manifold \cite{31m}.

All the aforementioned IDNC works, for both PMP and D2D networks, are centralized in a sense that they require a global coordinator, i.e., a BS or a cloud, to plan packet combinations and coordinate transmissions. For example, the authors of \cite{30m} considered the completion time minimization problem in a partially connected D2D FRANs. The problem is solved under the assumption that the fog is within the transmission range of all devices and has perfect knowledge of the network topology. The authors suggested that the fog selects transmitting devices and their optimal packet combinations and conveys the information to the users for execution. 

While the aforementioned centralized approaches provide a good performance for the decentralized system, it comes at a high computation cost at the cloud/fog units and high power consumption at each user.  Indeed, users need to send the status of all D2D channels to the central controller at each time slot. In addition, the cloud controller requires to know the downloading history of users for content delivery.\ignore{ Furthermore, these  centralized approaches are not feasible in some network configurations, including the roadside-to-vehicle and vehicle-to-vehicle (V2V) applications discussed earlier.} Recently, the authors in \cite{33m}, \cite{33e}  proposed a distributed solution for D2D networks that rely on a non-cooperative game-theoretic formulation. However, in such game models, each player makes its decisions individually and selfishly. Furthermore, the system is assumed to be fully connected, i.e., single-hop, which only selects one player to transmit at any time instance. The fully connected model is not only an idealist in which all players are connected, it also causes severe latency (delay) in the network.  Our work proposes a fully distributed solution for completion time minimization in a partially connected D2D network using coalition games \cite{25}. Thus, multiple and altruistic players transmit IDNC packets simultaneously. 

Due to the cooperative and altruistic decisions among players, coalition games have been used in different network settings to optimize different parameters \cite{25aa, N1, 26, 42a, N2}. For example, the tutorial in \cite{25aa} classified the coalition games and demonstrated the applications of coalition games in communication networks. The authors of \cite{N1} proposed a distributed game theoretical scheme for users' cooperation in wireless networks to maximize users' rate  while accounting the cost of cooperation.  The authors of \cite{42a} proposed a Bayesian coalitional game for coalition-based cooperative packet delivery. Recently, the authors of \cite{N2} suggested a constrained
coalition formation game for minimizing users' content uploading in D2D multi-hop networks. For packet recovery purpose, we employ coalition game and IDNC optimization in D2D multi-hop networks.

Our work considers D2D multi-hop networks comprising several single-interface devices distributed in a geographical area, and each device is partially connected to other devices.\ignore{ The considered channel allocation follows \cite{26m, 29m, 30m, 33m}, where devices can transmit using the same radio resource block, e.g., frequency.}  The packet recovery problem is motivated by real-time applications that tolerate only
low delays, i.e., multimedia streaming. In such applications, users' devices need to immediately exchange a set of packets, represented by a frame, between them with the minimum communication time. Our proposed model appears in different applications. For example, in current LTE system, where users at the edge of the service area or in dense urban areas often experience high degradation in the quality of signal from data centers due to channel impairments. Our proposed D2D distributed scheme would improve the  total communication time of such users by implementing short and reliable D2D communications. Moreover, in cell centers with low erasures, our proposed scheme would offload the cloud's resources, e.g.,
time, bandwidth, and the ability to serve more users. 

Motivated by the aforementioned discussions, our work solves the completion time reduction problem in partially connected D2D networks. To this end, we introduce a novel coalition game framework capturing the complex
interplays of instantly decodable network coding, transmitting user-receiving user associations, and a limited coverage zone of each user. The main contributions of this work can be summarized as follows.
\begin{enumerate}
\item We formulate the \textit{completion time minimization problem} in partially connected D2D networks and model it as a coalition game. We further demonstrate the difficulty of expressing the problem as a coalition game with non-transfer function (NTU) which motivates its relaxation to a \textit{coalition formation game} (CFG). 
\item We derive the rules for assigning players\footnote{Player and device are used interchangeably throughout this paper.}, selecting transmitting player, and finding optimal encoded IDNC packets for each disjoint altruistic coalition.
\item We propose a distributed algorithm based on merge-and-split rules and study its convergence analysis, stability, \ignore{robustness,}complexity, and communication overhead. 
\item We validate our theoretical finding using numerical simulations. Our numerical results reveal that our distributed scheme can significantly outperform existing centralized PMP and fully distributed methods. Indeed, for presented network setups, our coalition formation game offers almost the same performance as the centralized FRAN scheme. 
\end{enumerate}

The rest of this paper is organized as follows. \sref{SMMM} introduces the system model and formulates the completion time minimization problem. Afterward, the problem is modeled as a coalition game and relaxed to a coalition formation game in \sref{COA}. The proposed distributed algorithm can be found in \sref{PS}, and its convergence analysis, stability, complexity, and communication overhead are provided in \sref{PA}.  \sref{SR} numerically tests the performance of the proposed method against existing schemes, and \sref{CC} concludes the paper.

\section{System Overview and Problem Formulation} \label{SMMM}

The considered network and IDNC models are introduced in \sref{sec:sub1} and \sref{sec:sub2}, respectively. The fully distributed completion time reduction problem in the considered network is formulated in \sref{PF}. \sref{sec:sub3} further shows through a simple example that the completion time problem is generally intractable, which motivates the coalition game formulation in \sref{COA}.

\subsection{Network Model and Parameters} \label{sec:sub1}

Consider a D2D-enabled wireless network consisting of $N$ users denoted by the set $\mathcal{U}=\{u_1, u_2, \ \cdots, u_N\}$. These users are interested in receiving a frame $\mathcal{P}=\{p_1, p_2, \ \cdots, p_M \}$ of $M$ packets. The size of the frame $\mathcal{P}$ depends on the size of the packet and size of content. Due to previous initial transmissions, from data centers or access points, each device holds a part of the frame $\mathcal{P}$. The side information of the $u$-th device is represented by the following sets.
\begin{itemize} 
\item The \textit{Has} set $\mathcal{H}_u$: Successfully received packets.
\item The \textit{Wants} set $\mathcal {W}_u=\mathcal{P} \setminus \mathcal{H}_u$: Erased/lost packets.
\end{itemize}

The side information of all players can be summarized in a binary $N \times M$ \textit{state matrix} $\mathbf{S}=[s_{up}]$ wherein the entry $s_{up}=0$ states that packet $p$ is successfully received by player $u$ and $1$ otherwise. In order for all users to obtain the whole frame $\mathcal{P}$ from D2D communications, we assume that each packet $p_i, 1 \leq i \leq M$ is received by at least one user. In other words, the sum of the rows $\sum_{u \in \mathcal{U}} s_{up} \geq 1$ for all packets $p \in \mathcal{P}$.

We consider a realistic multi-hop network topology. In such networks, battery-powered devices can only target the subset of devices in their coverage zone, denoted here by $\mathcal{C}_u$ of the $u$-th player.\ignore{ In network coding literature, multi-hop topology is referred to as partially connected D2D networks whereas single hop topology
is referred to as fully connected D2D networks.} The network topology can be captured by a unit diagonal symmetric $N \times N$ adjacency matrix $\mathbf{C}$ represents the connectivity of the players such that $\mathbf{C}_{u u^\prime} = 1$ if and only if $u^\prime \in \mathcal{C}_u$. We assume that no part of the network is disjoint, i.e., the matrix $\mathbf{C}$ is connected. Otherwise, the proposed algorithm is separately applied to each independent part of the network. Upon successful reception of a packet, each player send an error-free acknowledgment (ACK) to all players in its coverage zone to update their side information matrix. 

We focus only on upper layer view of the network, where network coding scheme is performed at the
network-layer and the physical-layer is abstracted by a memory-less erasure channel. This
abstraction is widely used in network coding literature, where a packet is either perfectly
received or completely lost with certain average probability  \cite{12m}, \cite{14m}, \cite{26m}, \cite{29m, 30m, 31m, 33m, 33e}, \cite{N3}.  Therefore, the physical channel between players $u$ and $u'$ is modeled by a Bernoulli random variable whose mean
$\sigma_{uu'}$ indicates the packet erasure probability from player $u$ to player $u'$. We assume that these probabilities remain constant during the transmission of a single packet $p_i \in \mathcal{P}$ and they are known to all devices. However, due to the channel's asymmetry and the difference in the transmit powers of both devices $u$ and $u^\prime$, the equality of $\sigma_{u u^\prime}$ and $\sigma_{u^\prime u}$ is not guaranteed.

We consider a slowly changing network topology, in which players have fixed locations during the IDNC packet transmission and change from one transmission to another transmission.
However, after one transmission, the devices can move and all the network variables will be updated, and our model, i.e., the coalition formation solution, can be used with updated network parameters. It is important to note that in single-hop networks, each player is
connected to all other players in the network, and hence, it precisely knows the side information of all other players. To avoid any collision in the network, only one player is allowed to transmit
an encoded packet in one hop at any time slot. Clearly, this causes severe latency, i.e., delay, in delivering packets to all players. In multi-hop networks, multiple players are allowed to transmit
encoded packets simultaneously. This
results in targeting many players, and thus makes the delivery of packets to the players faster.

\subsection{Instantly Decodable Network Coding Model} \label{sec:sub2}

\ignore{IDNC is a network coding scheme that suits most of the critical-time applications in which users need immediate decoding of the received packets. As mentioned earlier,}IDNC encodes packets through binary XOR operations. Let $\kappa \subset \mathcal{P}$ be an XOR combination of some packets in $\mathcal{P}$. The transmission of the combination $\kappa$ is beneficial to the $u$-th user, in a sense that it allows the $u$-th user to retrieve one of its missing packets, if and only if the combination contains a single packet from $\mathcal{W}_u$. In that case, the user $u$ can XOR the combination $\kappa$ with $\kappa \cap \mathcal{H}_u$ to obtain its missing packet. Hence, we say that the user $u$ is \emph{targeted} by the transmission $\kappa$.

Let $\mathcal{A}^{(t)} \subset \mathcal{U}$ denote the set of transmitting players at the $t$-th transmission and $\underline{\rm\kappa}^{(t)}(\mathcal{A})=(\kappa_1, \ \cdots, \ \kappa_{|\mathcal{A}^{(t)}|})$ denote the packet combinations to be sent by users in $\mathcal{A}^{(t)}$. For notation simplicity, the time index $t$ is often omitted when it is clear from the context. Similar to \cite{29m}, \cite{30m}, \cite{33m}, \cite{33e}, we  consider 
players use the same frequency band and transmit encoded packets simultaneously. Thus, players located in the intersection of the coverage zone of multiple transmitting players experience
collision at the network layer and no packets can be decoded. Considering the interference of transmissions caused by other players to the set of transmitting players in partially connected D2D networks can be pursued in a future work. Therefore, player $u^\prime$ is targeted by the transmission from the $u$-th player if and only if it can receive the transmission and the packet combination contains a single file from $\mathcal {W}_{u^\prime}$. Let $\underline{\rm\tau}({\underline{\rm\kappa}(\mathcal{A})})=(\tau_1, \ \cdots, \ \tau_{| \mathcal{A}|})$ denote the set of targeted players by the transmitting players wherein $u^\prime \in \tau_u({\underline{\rm\kappa}(\mathcal{A})})$ implies that $|\mathcal{W}_{u^\prime} \cap {{\rm\kappa}_u(\mathcal{A})}| = 1$ and $\{u^\prime\} \cap \mathcal{C}_u \cap \mathcal{C}_{u''} = \delta_{u{u''}} \{u^\prime\}$ for all transmitting players $u'' \in \mathcal{A}$ wherein $\delta_{u{u''}}$ is the Kronecker symbol.

\begin{definition} 
The individual completion time $\mathcal {T}_u$ of the $u$-th player is the number of transmissions required until it gets all packets in $\mathcal{P}$. The overall completion time $\mathcal {T} =\max _{u \in\mathcal{U}}\{\mathcal{T}_u\}$ represents the time required until all the players get all the packets.
\end{definition}

We use IDNC to minimize the completion time required to complete the reception of all packets for all users in the partially connected D2D network. Given that the direct minimization of the completion time is intractable \cite{31m}, we follow \cite{16m} in reducing the completion time by controlling the decoding delay.

\begin{definition} 
The decoding delay $\mathcal{D}_u$ of player $u$ increases by one unit if and only if the player still wants packets, i.e., $\mathcal{W}_i \neq \varnothing$, and receives a combination that does not allow it to reduce the size of its Wants set. The decoding delay $\mathcal{D}$ is the sum of all individual delays.
\end{definition}

\subsection{Completion Time Minimization Problem Formulation}\label{PF}

In this subsection, we formulate the distributed completion time reduction problem in IDNC-enabled D2D network. Let $\underline{\rm N}$ be a binary vector of size $N$ whose $u$-th index is $1$ if player $u$ has non-empty \textit{Wants} set, i.e., $\mathcal{W}_u\neq \varnothing$ and $0$ otherwise, and let $\underline{\rm{\overline{\rm{\tau}}}}(\underline{\kappa}(\mathcal{A}))=\underline{1}-\underline{\rm \tau}(\underline{\kappa}(\mathcal{A}))$ be the set of the non-targeted players by the encoded packets $\underline{\kappa}(\mathcal{A})$. The different erasure occurrences at the $t$-th time slot are denoted by $\boldsymbol{\omega}:\mathbb{Z}_+ \rightarrow \{0,1\}^{N\times N}$ with $\boldsymbol{\omega}(t)=[Y_{u u^\prime}]$, for all $(u,u^\prime)\in \mathcal{U}^2$, where $Y_{u u^\prime}$ is a Bernoulli random variable equal to $0$ with probability $\sigma_{u u^\prime}$. 

Let $\underline{\rm a}_t=(a_t^{[1]}, a_t^{[2]},\ \cdots, \ a_t^{[N]})$ be a binary vector of length $N$ whose $a_t^{[u]}$-th element is equal to $1$ if player $u$ is transmitting, i.e., $\|\underline{\rm a}\|_1=|\mathcal{A}|$. Likewise, let $\underline{\rm \mathcal{D}}(\underline{\rm a}_t)$ be the decoding delay
experienced by all players in the $t$-th recovery round. In particular, $\underline{\rm \mathcal{D}}(\underline{\rm a}_t)$ is a metric quantifies
the ability of the transmitting players to generate innovative packets for all the targeted players. This metric increases by one unit for each player that still wants packets and successfully receives a nonuseful
transmission from any transmitting player in $\mathcal{A}$ or for a transmitting player that still wants some packets.  Let $\underline{\rm \mathcal{I}}=(\mathcal{I}^{[1]}, \mathcal{I}^{[2]},\ \cdots, \ \mathcal{I}^{[N]})$ be a binary vector of size $N$ whose $\mathcal{I}^{[u]}$ entry is $1$ if player $u$ is hearing more than one transmission from the set $\mathcal{A}$, i.e., $u \in \mathcal{C}_{u^\prime }\cap \mathcal{C}_{u''}$ where $u^\prime \neq u'' \in \mathcal{A}$ and $0$ otherwise, and let $\underline{\rm \mathcal{O}}=(\mathcal{O}^{[1]}, \mathcal{O}^{[2]},\ \cdots, \ \mathcal{O}^{[N]})$ be a binary vector of size $N$ whose $\mathcal{O}^{[u]}$ element is $1$ if player $u$ is out of transmission range of any player in $\mathcal{A}$, i.e., $u \notin \mathcal{C}_{u^\prime}, \forall ~u^\prime \in \mathcal{A}$ and $0$ otherwise.  

Given the above configurations, the overall decoding delays $\underline{\rm \mathbb{D}}(\underline{\rm a}_t)$ experienced by all players, since the beginning of the recovery phase until the $t$-th transmission, can be expressed as follows.
\begin{align} \label{eq3}
\underline{\rm \mathbb{D}}(\underline{\rm a}_t) =\underline{\rm \mathbb{D}}(t-1) +
\begin{cases}
\underline N \hspace{2cm} & \mbox{if}~~\|\underline{\rm a}_t\|_1=0\\
\underline{ \mathcal{I}}+\underline{ \mathcal{O}}+ \underline{\rm a}_t+\underline{\rm \mathcal{D}}(\underline{\rm a}_t) & \mbox{otherwise}.
\end{cases}
\end{align}

As mentioned, the completion time is a difficult and intractable metric to optimize. However, in network coding literature, such metric is approximated by the \emph{anticipated} completion time which can be computed at each time instant using the decoding delay. Using the decoding delay in \eref{eq3}, the anticipated completion time is defined as follows.
\begin{definition}
The anticipated individual completion time of the $u$-th player is defined by the following expression 
\begin{align}
\mathcal{T}_u(\underline{\rm a}_t) = \frac{|\mathcal{W}^{(0)}_u|+ \mathbb{D}_u(\underline{\rm a}_t)-\mathbb{E}[\sigma_u]}{1-\mathbb{E}[\sigma_u]}, \label{eq85}
\end{align}
where $|\mathcal{W}^{(0)}_u|$ is the Wants set of player $u$ at the beginning of the recovery phase and $\mathbb{E}[\sigma_u]$ is the expected erasure probability linking player $u$ to the other players.\end{definition}
Clearly, \eref{eq85} represents the
number of transmissions that are required to complete the transmission of all requested packets
in $\mathcal{P}$. In this context, completion time is intimately related to the throughput of the system. Throughput is measured as the number of cooperative D2D transmission rounds required by the players to download all their requested
packets.

The overall anticipated completion time can be written as $\mathcal{T}(\underline{\rm a}_t) =\max\limits_u (\mathcal{T}_u(\underline{\rm a}_t))=\|\underline{\mathcal{T}}(\underline{\rm a}_t)\|_\infty$. Therefore, the anticipated completion time minimization problem at the $t$-th transmission in IDNC-enabled D2D multi-hop network can be written as follows.
\begin{align}
\min_ {\substack{\underline{\rm a}_t\in \{0,1\}^N\\ \underline{\kappa}(\mathcal{A}) \in \{0,1\}^{M}}} \|\underline{\mathcal{T}}(\underline{\rm a}_t)\|_\infty. \label{eqct}
\end{align}

Unlike single-hop model that requires only
an optimization over a single transmitting player and its corresponding packet combination, a multi-hop model needs to select the set of transmitting players $\mathcal{A}$ and the optimal
encoded packets $\underline{\kappa}(\mathcal{A})$. As such,  the probability of increasing the anticipated completion time is minimized.

\subsection{Example of IDNC Transmissions in a Partially Connected D2D-enabled Network}\label{sec:sub3}

This section illustrates the aforementioned definitions and concepts with a simple example. Consider a simple partially connected D2D network containing $6$ players and a frame $\mathcal{P}=\{p_1,p_2,p_3,p_4\}$ as illustrated in Fig. \ref{fig1}. The side information of all players is given on the left part of Fig. \ref{fig1}, and the coverage zone of each player is represented by edges. For ease of analysis, we assume error-free transmissions. 

Assume that $u_1$ transmits the encoded packet $\kappa_1=p_3\oplus p_4$ to players $u_2$, $u_3$, $u_5$, and let $u_6$ transmit $\kappa_6=p_1\oplus p_4$ to players $u_4, u_5$ in the first time slot. Then, in the second time slot, $u_4$ transmits $\kappa_4=p_2$ to $u_6$, and $u_1$ transmits $\kappa_1=p_2\oplus p_4$ to players $u_2, u_5$. The decoding delay experienced by the different players is given as follows.
\begin{itemize}
\item Player $u_5$ experiences one unit delay as it is in the intersection of the coverage zone of $u_1$ and $u_6$. In other words, $u_5$ is in collision, i.e., $u_5 \in \underline{\rm \mathcal{I}}$. Thus, player $u_5$ would not be able to decode packet $\kappa_6$ transmitted by player $u_6$.
\item Player $u_6$ experiences one unit of delay as it is transmitting in the first time slot. 
\end{itemize} 
Under this scenario, we have the following assumption.
\begin{itemize}
\item First time slot: $\underline{\rm N}=(0~1~1~1~1~1)$, the set of transmitting players $\mathcal{A}^{(1)}=\{u_1, u_6\}=\underline{\rm a}_1=(1~0~0~0~0~ 1)$, the corresponding encoded packets $\underline{\rm\kappa}(\mathcal{A}^{(1)})=(\kappa_1, \kappa_6)$, and the set of targeted players $\underline{\rm\tau}({\underline{\rm\kappa}(\mathcal{A}^{(1)})})=(\tau_1, \tau_6)=\{(u_2, u_3), (u_4)\}$. The set of players that hearing more than one transmission $\underline{\rm \mathcal{I}}=(0~0~0~0~1~0)$, and the set of players that out of transmission range of any player in $\mathcal{A}^{(1)}$ is $\underline{\rm \mathcal{O}}=\underline{\rm 0}$. The decoding delay experienced by all players is $\underline{\rm \mathcal{D}}(\underline{\rm a}_1)=(0~0~0~0~1~1)$. The accumulative decoding delay is $\underline{\rm \mathbb{D}}(\underline{\rm a}_1)=(0~0~0~0~1~1)$.
\item Second time slot: $\underline{\rm N}=(0~1~0~0~1~1)$, the set of transmitting players $\mathcal{A}^{(2)}=\{u_1,u_4\}=\underline{\rm a}_1=(1~0~0~1~0~ 0)$, the corresponding encoded packets $\underline{\rm\kappa}(\mathcal{A}^{(2)})=(\kappa_1, \kappa_4)$, and the set of targeted players $\underline{\rm\tau}({\underline{\rm\kappa}(\mathcal{A}^{(2)})})=(\tau_1,\tau_4)=\{(u_2, u_5), (u_6)\}$. The set of players hearing more than one transmission $\underline{\rm \mathcal{I}}=\underline{\rm 0}$, and the set of players that out of transmission range of any player in $\mathcal{A}^{(1)}$ is $\underline{\rm \mathcal{O}}=\underline{\rm 0}$. The decoding delay is $\underline{\rm \mathcal{D}}(\underline{\rm a}_2)=\underline{\rm 0}$ and the accumulative decoding delay $\underline{\rm \mathbb{D}}(\underline{\rm a}_2)=(0~0~0~0~1~1)$. 
\item The individual completion time of all players after the second transmission is \\$\mathcal{T}=(0~2~1~1~2~2)$. Thus, the maximum completion time is $2$ time slots which represents the overall completion time for all players to get their requested packets, i.e., $\underline{\rm N}=\underline{\rm 0}$. 
\end{itemize} 

\begin{figure}[t!]
\centering
\includegraphics[width=0.4\linewidth]{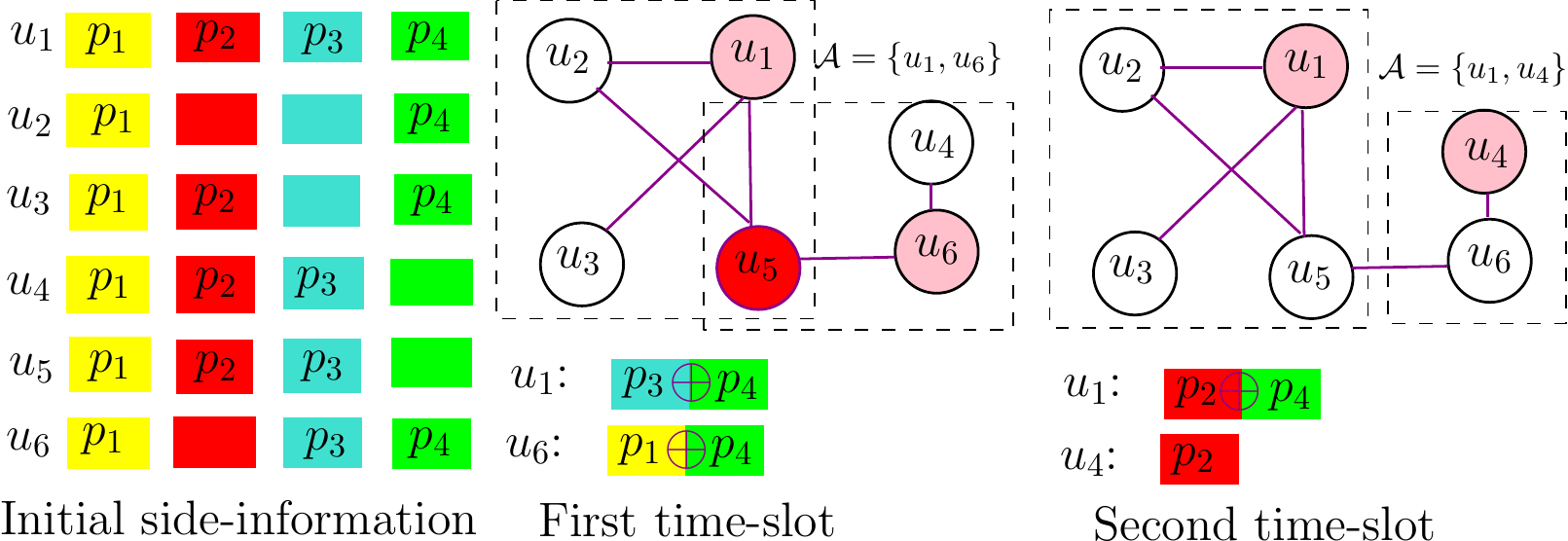}
\caption{A partially connected D2D network containing $6$ players and $4$ packets.}
\label{fig1}
\end{figure}

\ignore{The next section suggests formulating the above completion time optimization problem as a coalition formation game (CFG) to separate the following optimization parameters: the transmitting players and their optimal encoded packets. As such, a solution to \eref{eqct} can be obtained by finding a coalition structure using a fully distributed algorithm. }

\section{Distributed Completion Time Minimization as a Coalition Game}\label{COA}

This section models the completion time problem in IDNC-enabled D2D multi-hop networks using coalition games \cite{25}. Afterward, fundamental concepts in coalition games are defined and provided. These concepts are used in \sref{PS} to derive the distributed completion time reduction solution in a partially connected D2D network.

\subsection{Completion Time Minimization as a Coalition Game}

To mathematically model the aforementioned completion time problem, we use coalition game theory. In particular, the problem is modeled as a coalition game with a non-transferable utility (NTU)\cite{25}.
\begin{definition}
A coalition game with a \textit{non-transferable utility}
is defined as a pair ($\mathcal{U},\phi$), where $\mathcal{U}$ is the set of players consisting of $N$ devices and $\phi$ is a real function such that for every coalition $\mathcal{S}_s\subseteq \mathcal{U}$, $\phi(\mathcal{S}_s)$ is the payoff that coalition $\mathcal{S}_s$ receives which cannot be arbitrarily apportioned between its players.
\end{definition}

For the problem of cooperative D2D completion time among players, given any coalition $\mathcal{S}_s\subseteq \mathcal{U}$, we define $\phi(\mathcal{S}_s) = (\phi_1(\mathcal{S}_s), \ \cdots, \ \phi_{|\mathcal{S}|}(\mathcal{S}_s))$ as the tuple wherein element $\phi_u(\mathcal{S}_s)$ represents the payoff of player $u$ in coalition $\mathcal{S}_s$. Lets $|\mathcal{S}_s|$ represents the total number of players in $\mathcal{S}_s$. The $|\mathcal{S}|$-dimensional vector represents the family of real vector payoffs of coalition $\mathcal{S}_s$, which is denoted by $\underline{\phi}(\mathcal{S}_s)$.  As previously mentioned, for each coalition, we need to determine the transmitting player and its IDNC packet selection in order to minimize the increasing of the completion time. Consequently, by adopting the cooperative D2D completion model described in the previous section, the total payoff of any coalition $\mathcal{S}_s\subseteq \mathcal{U}$, $\forall s=\{1, \cdots,k\}$ is given by 
\begin{align}
\phi(\mathcal{S}_s) =\max\limits_u (\phi_u(\mathcal{S}_s))=\|\underline{\phi}(\mathcal{S}_s)\|_\infty,
\label{eq4}
\end{align}
where $\phi_u(\mathcal{S})$ is the payoff of player $u$ which is in our problem given by
\begin{align}
\phi_u(\mathcal{S}_s) = - \|\mathcal{T}_u(\underline{\rm a}_t)\|_\infty - \|{\mathbb{D}}_u(\underline a_t)-{\mathbb{D}}_u(\underline a_{t-1})\|_1.
\label{eq5}
\end{align}

The payoff function in \eref{eq4} represents the total payoff
that a coalition receives due to self-organize players. For
a player $u\in \mathcal{S}_s$, the first term in \eref{eq5} represents the maximum anticipated completion time among players in $\mathcal{S}_s$ that is defined in \eref{eq85}. Similarly, the second term in \eref{eq5} represents the augmentation of the sum decoding delay that is defined in \eref{eq3}. Therefore, players in coalitions prefer to increase the payoff in \eref{eq5} by minimizing the anticipated completion time through controlling the decoding delay.

\ignore{Since each player has its own unique payoff within a coalition $\mathcal{S}$, i.e., unique anticipated completion time and unique decoding delay, the payoff of coalition $\mathcal{S}$ cannot be arbitrarily apportioned between its players. Thus, the payoff function in \eref{eq5} is considered as a \textit{non transferable utility} (NTU) \cite{25}. Consequently, we immediately have the following
property.}

\begin{property} The proposed D2D completion time cooperative problem is modeled as a coalition game with NTU ($\mathcal{U},\phi$)
where $\mathcal{U}$ is the set of players and $\phi$ is the payoff function given by \eref{eq4}.
\end{property}

\begin{IEEEproof}
From the nature of definition 1 and definition 2, each player $u$ has its own unique anticipated completion time and decoding delay, and, thus, it has a unique payoff $\phi_u(\mathcal{S}_s)$ within a coalition $\mathcal{S}_s$. Therefore, the payoff function in \eref{eq4} cannot be arbitrarily apportioned between coalition's players. Thus \eref{eq4} is considered as an NTU. Further, the overall completion time is the maximum individual completion times of the players regardless of the coalition. In other words, the dependency of $\phi(\mathcal{S}_s)$ in any coalition structure is not only on packet recovery of players inside $\mathcal{S}_s$, but also on packet recovery outside $\mathcal{S}_s$, which concludes that the proposed game model is NTU game.
\end{IEEEproof}

Although cooperation generally reduces the payoffs of players \cite{25}, it is limited by inherent information exchange cost that needs to be paid by the players when acting cooperatively. Consequently, for any coalition $\mathcal{S}_s\subseteq\mathcal{U}$, players need to exchange information for cooperation, which is an increasing function of the coalition size. The problem becomes severe when all players are in the same coalition, i.e., grand coalition (GC). However, given the realistic scenario of a partially connected network where each device has limited coverage, it is highly likely that when attempting to form the GC, one of these scenarios would hold: 1) there exist a pair of players $u$, $u^\prime \in \mathcal{U}$ that are distant enough to receive packets from the set $\mathcal{A}$, thus they have no incentive to join the grand coalition, and 2) there exists a player $u \in \mathcal{U}$ with a payoff in GC $\phi_u(\mathcal{U}(t))$ that is greater than its payoff in any coalition $\phi_u(\mathcal{S}_s)$. Hence, this player has an incentive to deviate from the GC. 

Since we consider partially connected D2D networks, players would most likely form coalitions with their neighbors based on their preferences, which results in forming small coalitions' sizes, not large coalitions' sizes. In other words, the GC of all the players is \textit{seldom} formed. Therefore, the cost due to small coalition formations would not have a significant impact on the payoff functions. Subsequently, the proposed ($\mathcal{U},\phi$) game is classified as a coalition formation game  (CFG) \cite{25aa}, where players form several independent disjoint coalitions. Hence, classical solution concepts for coalition games, such as the core \cite{25}, may not be applicable for our problem. In brief, the proposed coalition game ($\mathcal{U},\phi$) is a CFG, where the objective is to offer an algorithm for forming coalitions.

\subsection{Coalition Formation Concepts}
This section recalls the fundamental concepts of coalition formation games that are used in the next section.
CFG, a subclass of coalition games, has been a topic of high interest in game theory research \cite{25aa, N1, N2}. The fundamental approach in coalition formation games is to allow players in the formation set to join or leave a coalition based on a well-defined and most suitable \textit{preference} for NTU games, i.e., \textit{Pareto Order}. Pareto Order is the basis of many existing
coalition formation concepts, e.g., the merge-and-split algorithm \cite{26}.\ignore{ In what follows, key definitions about coalition formations are introduced \cite{44aaaa}.}

\begin{definition} A coalition structure, denoted as $\Psi$, is defined as $\Psi=\{\mathcal{S}_1, \ \cdots, \mathcal{S}_k\}$ for $1 <|\mathcal{S}_k|< |\mathcal{U}|$ independent disjoint coalitions $\mathcal{S}_k$ of $\Psi$.
\end{definition}

One can see from definition 5 that different coalition structures may lead to different system payoffs as each coalition structure $\Psi$ has its unique payoff $\phi(\Psi)$. These differences in $\Psi$ and their corresponding payoffs $\phi(\Psi)$ are usually ordered through a comparison relationship. In the coalition game literature, e.g., \cite{26}, comparison relationships based on orders are divided into individual value orders and coalition value orders. Individual order implies that comparison is performed based on the players' payoffs. This is referred to as the Pareto Order. In particular, in such order, no player is willing to move to another coalition when at least one of the players in that coalition is worse off. In other words, the payoff of players would be worse off after the new player joins. This is known as selfish behavior. Coalition order implies that two coalition structures are compared based on the payoff of the coalitions in these coalition structures. This is known as a utilitarian order and is denoted by $\triangleright$. In other words, the notation $\Psi_2\triangleright \Psi_1$ means that $\phi(\Psi_1)>\phi(\Psi_2)$. Subsequently, the definition of the preference operator that considered in this paper is given as follows.
\begin{definition} A preference operator $\triangleright$ is defined for comparing two coalition structures $\Psi_1=\{\mathcal{S}_1, \ \cdots, \mathcal{S}_k\}$ and $\Psi_2=\{\mathcal{R}_1, \ \cdots, \mathcal{R}_m\}$ that are partitions of the same set of players $\mathcal{U}$. The notation $\Psi_2 \triangleright \Psi_1$ denotes that players in $\mathcal{U}$ are preferred to be in $\Psi_2$ than $\Psi_1$.
\end{definition}

\ignore{Given the above coalition formation concepts, next section gives the rules of forming a coalition in the above partially D2D connected network to provide a fully distributed algorithm.}


\section{Proposed Fully Distributed Solution} \label{PS}
This section derives the constraints of forming a coalition. These constraints represent the optimal players' associations, the transmitting player, and its optimal IDNC packet in a coalition. By the given constraints, our aim is to propose a distributed coalition formation algorithm relying on merge-and-split rules \cite{26}. 

\subsection{Coalition Formation Constraints}

Let $\mathcal{U}_s$ be the set of all associated players in coalition $\mathcal{S}_s$ and $\mathcal{N}_s$ the subset of $\mathcal{U}_s$ that have non-empty \textit{Wants} set. Let $\mathcal{M}_s$ be the subset of packets that in the \textit{Has} set of each player in $\mathcal{U}_s$, which defined as $\mathcal{M}_s=\bigcup_{u \in \mathcal{U}_s}\mathcal{H}_{u}$. Let $\mathtt{S}_s$ denote the set of all neighbor coalitions to coalition $\mathcal{S}_s$. For a coalition $\mathcal{S}_s$, the transmitting device $a^*_{s}$ is the one that can achieve the least expected increase in the completion time. 
According to the analysis available in \cite{33m,33e}, a transmitting device $a^{*}_s$ and its packet combination  $\kappa_{a_s^*}$ can be obtained by solving the following problem
\begin{align}\label{op7}
a_s^{*}=\operatorname*{arg\,max}\limits_{{\substack{\ a\in \mathcal{A}_s\setminus \mathcal{L}_s }}}|\mathcal{C}_{a}\cap\mathcal{N}_s|+\operatorname*{max}\limits_{\substack{\ \kappa_{a}\in \rm{\underline{\rm\kappa}(\mathcal{A}_s)}}} \sum\limits _{u\in\mathcal{L}_s\cap\tau(\kappa_{a})} \log\frac{1}{\sigma_{au}},
\end{align}
where $\mathcal{A}_s$ is the set of players in coalition $\mathcal{S}_s$ that are not in any coverage zone of all other players in $\mathtt{S}_s$ and $\mathcal{L}_s(t)$ is the set of critical players that can potentially increase the overall payoff of the coalition $\mathcal{S}_s$ before the $t$-th transmission. This set characterizes the players based on their anticiapted completion times to give them priority to be targeted in the next transmission. In other words, $\mathcal{L}_s(t)$ contains players that would potentially increase the maximum anticipated completion time if they are not targeted in
the next transmission. It can be define  mathematically as 
\begin{equation} 
\begin{split}
\label{CSG}
\mathcal {L}_s(t)= \Bigl\{u\in\ \mathcal{U}\cap \mathcal{N}_s \big| \mathcal{T}_u(\underline{a}_t-1)+\frac{1}{1-\mathbb{E}[\sigma_u]}
\geqslant \|\underline{\mathcal{T}}(\underline{a}_t-1)\|_\infty\Bigr\}.
\end{split}
\end{equation} The set of targeted players in coalition $\mathcal{S}_s$ when device $a_s^{*}$ transmits the combination $\kappa_{a_s^*}$ is
\begin{align}
\tau(\kappa_{a_s^{*}})=\left\{u \in \mathcal{S}_s \ \big||\kappa_{a_s^{*}} \cap \mathcal{W}_u| = 1~~\text{and}~~ \mathcal{C}_{a_s^* u} =1 \right\}.
\end{align}
With the aforementioned variable definitions, we can reformulate the completion time minimization problem in IDNC-based partially connected D2D network per coalition at each time instance as follows
\begin{subequations}\label{CF} 
\begin{align} 
&\min_ {{\substack{\ \underline{a}_t\in\{0,1\}^{|\mathcal{U}_s|} \\\underline{\kappa}\in\{0,1\}^{|\mathcal{M}_s|} }}}\phi(\mathcal{S}_s)\label{eq7aa} \\
& {\rm s. ~t.\ } |\tau(\kappa_{a_s^{*}})| \geqslant 1,\label{eq7bb} \\
&\tau(\kappa_{a_s^{*}}) \cap \tau(\kappa_{a_{s^\prime}^{*}}) =\varnothing, \forall~ a^*_s \neq a^*_{s^\prime} \in \mathtt{S}_s.\label{eq7dd}
 \end{align}
\end{subequations}
Constraint \eref{eq7bb} says that the number of targeted players in each coalition must be more than one to ensure that at each transmission at least a player is benefiting. Constraint \eref{eq7dd} states that all targeted players should not experience any collision. 

To find the optimal solution to the problem in \eref{CF}, we need to search over all the sets of optimal player-coalition associations, their different erasure patterns, players' actions and their optimal IDNC packets in one coalition. As pointed out in \cite{30m} for centralized fog system, this is a challenging problem. Further, the solution to \eref{CF} must go through the players' decisions to join/leave a coalition at each stage of the game. To seek a desirable solution to \eref{CF} that is capable of achieving significant completion time reduction, we propose to use a distributed algorithm relying on merge-and-split rules.

\subsection{A Distributed Coalition Formation Algorithm} \label{DC}
This section presents
a distributed coalition forming algorithm to obtain the minimum completion time of players. The key mechanism is to allow players in coalition formation process to make individual decisions for selecting potential neighbor coalitions at any game stage. We first define two rules of merge-and-split that allow the modification of $\Psi$ of the set $\mathcal{U}$ players as follows.

\begin{definition} (\textbf{Merge Operation}). Any set of coalitions $\{\mathcal{S}_1, \ \cdots, \mathcal{S}_k\}$ in $\Psi_1$ can be merged if and only if $(\bigcup^{k}\limits_{i = 1}\mathcal{S}_{i},\Psi_2) \triangleright (\{\mathcal{S}_1, \ \cdots, \mathcal{S}_k\},\Psi_1)$, where $\bigcup^{k}\limits_{i = 1}\mathcal{S}_{i}$ and $\Psi_2$ are the new set of coalitions and the new coalition structure after the merge operation, respectively.
\end{definition}

\begin{definition} (\textbf{Split Operation}). Any set of coalitions $\bigcup^{k}\limits_{i = 1}\mathcal{S}_{i}$ in $\Psi_1$ can be split if and only if $(\{\mathcal{S}_1, \ \cdots, \mathcal{S}_k\},\Psi_2) \triangleright (\bigcup^{k}\limits_{i = 1}\mathcal{S}_{i},\Psi_1)$, where $\{\mathcal{S}_1, \ \cdots, \mathcal{S}_k\}$ and $\Psi_2$ are the new set of coalitions and the new coalition structure after the split operation, respectively.
\end{definition}

The merge rule means that two coalitions merge if their merger would benefit not only the players in the merged coalition but also benefit the overall coalition structure value, i.e., the overall completion time. On the other hand, a coalition split into smaller ones if its splitter coalitions enhance at least the payoff of one player in that coalition. Therefore, using these two known rules, we present a distributed algorithm to solve the completion time minimization problem in \eref{eqct}. The proposed algorithm is broken into three steps as follows.

First, in $\Psi_\text{ini}$, players need to discover their neighbors by utilizing one of different known neighbor
discovery schemes, e.g., those used in wireless networks \cite{44aa}. For example, each player broadcasts a message consisting of two segments; each segment consists of one byte. While the first byte indicates the number of players in each player's coverage zone, the second byte indicates the completion time of that player. Further, players collect all aforementioned information, and the one who is connected to a large number of players, has a large \textit{Has} set, and not in the coverage zone of any player in any other coalitions. However, if such player does not exist, the size of the coalition is increased until that player exists.  To summarize, a transmitting player $a^*_{s}$ in coalition $s$ should satisfy \eref{eq7bb} and \eref{eq7dd} and can be obtained by solving problem \eref{op7}. \ignore{should satisfy these following conditions (C).
\begin{itemize} 
\item C1: Not in the critical set of players and connected to the largest possible players in the coalition, i.e., $a^{*}_{s}=\operatorname*{arg\,max}\limits_{{\substack{\ a\in \mathcal{A}_s\setminus \mathcal{L}_s }}}|\mathcal{C}_{a}\cap\mathcal{N}_s|$.
\item C2: Not in the coverage zone of any players in any other coalitions. In other words, $a^*_{s}$ can only target players in coalition $s$, i.e., $\tau(\kappa_{a_s^{*}}) \cap \tau(\kappa_{a_{s^\prime}^{*}}) =\varnothing, \forall~ a^*_s \neq a^*_{s^\prime} \in \mathtt{S}_s$.
\end{itemize}}
 Afterward, each player evaluates its potential payoff as in \eref{eq5} to make an accurate decision as explained in step II. The selected transmitting player in each coalition is referred to a \textit{coalition head} who can do the analysis in step II. Therefore, this step significantly reduces the search space of the coalition formation.

The coalition formation step optimizes the selection of the transmitting players  and their IDNC packets through many successive split-and-merge rules between coalitions. Therefore, step II is to assign players to potential neighbor coalitions, select the transmitting player, and find its optimal IDNC packet, which can be accomplished by the following. In this step, the time-index
is updated to $\tau=\tau+1$. The merge rules are implemented by checking the merging possibilities of each pair of neighbor coalitions $s$ and $k$. Particularly, a coalition $s\in \Psi_\tau$ can decide to merge with another coalition $k$ to form a new coalition $j$. As such, the resulting structure guarantees both merge conditions (MC).
\begin{itemize}
\item MC1: There exists at least one player satisfies \eref{eq7bb} and \eref{eq7dd}. 
\item MC2: At least one player in the merged coalition can reduce its individual payoff without increasing the payoffs of the other players.
\end{itemize}  
After all the coalitions have made
their merge decisions based on the players preferences, the merge rules end. This results in the updated
coalition structure $\Psi_\tau$. 
Similarly, the split rules performed on the players that do not benefit from being a member of that coalition. In other terms, coalition $s\in \Psi_\tau$ can be splitted into coalitions of smaller sizes as long as the splitter coalitions guarantee both split conditions (SC).
\begin{itemize}
\item SC1: At least one player can strictly enhance its payoff without increasing the payoffs of all the remaining players.
\item SC2: In each split coalition, there exists at least one player satisfying \eref{eq7bb} and \eref{eq7dd}.
\end{itemize} 
At the end of the split rules, the coalition structure $\Psi_1$ is updated. The time index is updated along with a sequence of merge-and-split rules which take place in a distributed manner. Such sequence continues based on the resulting payoff of each player and coalition. It ends when there is no further  merge-and-split rules required in the current coalition structure $\Psi_\tau$, which is the case of the final coalition structure $\Psi_\text{fin}$.

Finally, each transmitting player in each coalition broadcasts an IDNC packet to all players in its coverage zone. The distributed merge-and-split coalition formation algorithm is summarized in Algorithm \ref{Alg1}.
We repeat the above three steps until all packets are disseminated among players, as explained in \algref{Alg2}.

\begin{algorithm}[t!]
\caption{Coalition Formation Distributed Algorithm for a D2D Multi-hop Network}
\label{Alg1}
\textbf{Initialization:}\; 
Players are organized themselves into an initial coalition structure $\Psi_\text{ini}=\{\mathcal{S}_1, \ \cdots, \mathcal{S}_k\}$;\;
Initialize time-index $\tau=0$ and $\Psi_\tau=\Psi_\text{ini}$;\;
\textbf{Step I: Coalition Members Discovery};\;
\begin{itemize}
\item Each player discovers its neighboring players.\;
\For { \text{each} $\mathcal{S}_s\in \Psi_{\text{ini}}, \forall s=\{1,2,\ \cdots, \, k \}$}{
Select the transmitting players $\mathcal{A}_s$ that satisfying \eref{eq7bb} and  \eref{eq7dd} and find $a_s^{*}$ and its IDNC packet $\kappa_{a_s^{*}}$ by solving  \eref{op7}.\;
Calculate the utility of each player as in \eref{eq5}.\;}
\end{itemize}
\textbf{Step II: Coalition Formation};\;
\begin{itemize}
\item The optimization target in coalition $\mathcal{S}_s$ is $\min\limits_{{\substack{\ \underline{a}_t\in\{0,1\}^{|\mathcal{U}_s|} \\\underline{\kappa}\in\{0,1\}^{|\mathcal{M}_s|} }}}\phi(\mathcal{S}_s)$.\;
\item Obtain player's assignments based on the two main rules of merge and split:\;
\Repeat{No further merge nor split rules}{ Update $\tau=\tau+1$.\;
\For { \text{each} $\mathcal{S}_s\in \Psi_{\tau-1}, \forall s=\{1,2,\ \cdots, \, k \}$} {%
The selected transmitting player analyzes all possible merge rules.\;
If a merge occurs, the current coalition structure $\Psi_{\tau-1}$
is updated.\;
Update $\mathcal{A}_s$ and update the selected transmitting player by solving \eref{op7}.\;
Set $\Psi_{\tau}=\Psi_{\tau-1}$.
} 
\For { \text{each} $\mathcal{S}_s\in \Psi_{\tau}, \forall s=\{1,2,\ \cdots, \, k \}$} {%
The selected transmitting player analyzes all possible split rules.\;
If a split occurs, the current coalition structure $\Psi_{\tau}$
is updated.\;
Update $\mathcal{A}_s$ and update the selected transmitting player by solving \eref{op7}.\;
} 

}
\end{itemize}
\textbf{Output} The convergence coalition structure $\Psi_\text{fin}=\Psi_{\tau}$.\;
\textbf{Step III: IDNC Packet Transmission};\;
\begin{itemize}
\item Each transmitting player $a_s^{*}$ in each coalition broadcasts IDNC packet $\kappa_{a_s^{*}}$ to all players in its coverage zone.
\end{itemize}
\end{algorithm}

\begin{algorithm}[t!]
\caption{Overall D2D Multi-hop Approach for Solving Problem \eref{eqct}}
\label{Alg2}
\textbf{Data:} $\mathcal{U}$, $\mathcal{P}$, $\mathcal{H}_u$, $\mathcal{W}_u$, $\mathcal{C}_u$, $\mathcal{T}_u=0$, $\mathcal{D}_u=0$, $\forall~ u\in \mathcal{U}$ and {\boldmath$\epsilon$}.\; 
Set time-index of the completion time $t=0$;\\
\textbf{Repeat:}
\begin{itemize}
\item Execute Algorithm \ref{Alg1} and obtain the IDNC packet for each transmitting player in $\Psi_{\text{fin}}$;
\item Each targeted player does an XOR binary operation and calculate the anticipated completion time as in \eref{eq85}.
\item Each targeted player broadcasts a one bit ACK, indicating the successful reception of the packet, to all players in its coverage zone.
\item $t=t+1$;
\end{itemize}
\textbf{Until} all packets are disseminated among players.\; 
\textbf{Output} the completion time $t$. 
\end{algorithm}

\begin{figure}[t]
\centering
\includegraphics[width=0.45\linewidth]{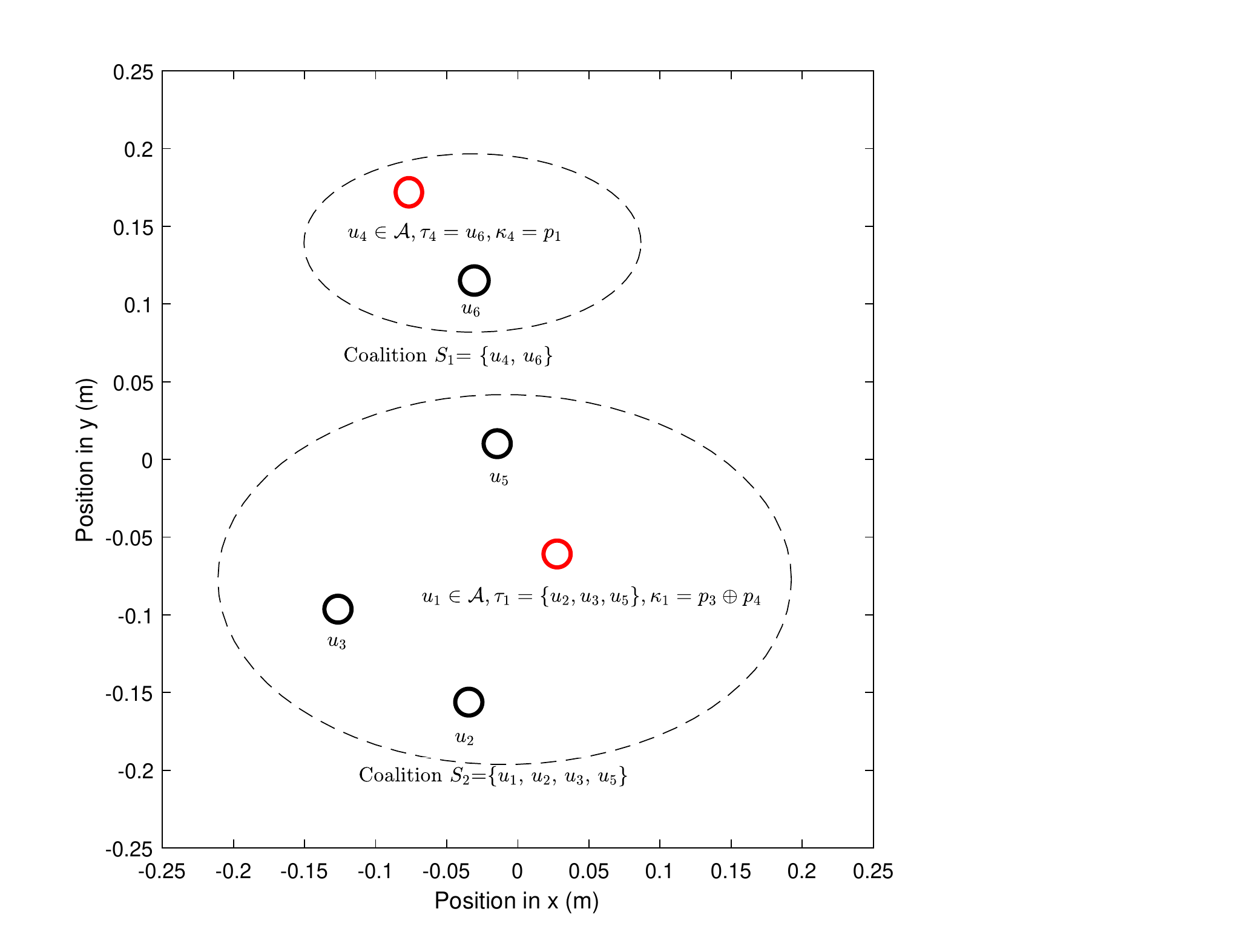}
\caption{A resulting coalition structure $\Psi_\text{fin}=\{\mathcal{S}_1, \mathcal{S}_2\}$ from \algref{Alg1} for a partially connected D2D network that is presented in Fig. \ref{fig1}. }
\label{fig2}
\end{figure}

Fig. \ref{fig2} depicts a snapshot of the coalition structure
$\Psi_\text{fin}$= $\{\mathcal{S}_1, \mathcal{S}_2\}$ resulting from  \algref{Alg1} for the simple D2D network presented in Fig. \ref{fig1}. For ease of analysis, we assume error-free transmissions. Given the coverage zone of each player and their side information as in Fig. \ref{fig1}, two disjoint coalitions are formed where only one player transmits in each coalition. In particular, in coalition $\mathcal{S}_1$, player $u_4$ transmits packet $p_1$ to player $u_6$, and in coalition $\mathcal{S}_2$,  player $u_{1}$ transmits an IDNC packet $p_3\oplus p_4$ to players $u_2$, $u_3$, $u_5$. The transmitting player in each coalition is shown in a red circle; their targeted players and the optimal IDNC packets are shown in Fig. \ref{fig2}. In a nutshell, we shed some remarks on executing \algref{Alg1}.
\begin{itemize}
\item The merge-and-split rules enumerate only the neighbor coalitions, and this does not necessarily need significant computations. To further reduce the computations, the players of a coalition $\mathcal{S}_s$ can avoid merging with other neighbor coalition $\mathcal{S}_k$ if the payoffs of the players in both coalitions are equal $\phi_u(\mathcal{S}_s)=\phi_{u'}(\mathcal{S}_k)$, $\forall u\in  \mathcal{S}_s$ and $\forall u'\in  \mathcal{S}_k$.
\ignore{\item The resulting coalitions' structure from the proposed distributed algorithm $\Psi_{\text{fin}}$ is not always guaranteed to be optimal. This is due to the formed coalitions being unaware of their payoffs and thus have no clue whether any different coalition structure would lead to a lower system payoff. Even if all coalitions' payoffs are known, it is infeasible to find the optimal coalition structure due to complexity constraints \cite{45aa}.}
\item Forming coalitions only one time, i.e., at the first stage of the game, is not guaranteed to disseminate all packets to all players. This is because each formed coalition has only some portion of packets and does not have the wanted packets of other players in other coalitions.  For packet recovery completion, each
coalition is formed, at each transmission round, based on the individual preference of its members and irrespective of the \textit{Has} sets of its members. Thus, each transmitting
player has disseminated some packets to each visited coalition in previous transmissions. 
\end{itemize}

In the considered game, each player has two actions to take either to transmit an IDNC packet $\kappa$ or to listen to a transmission. Therefore, the action of a player $u$ at each game stage $t$ is $\mathcal{AC}_u(t)=\{\text{transmit} ~\kappa_u,~ \text{remain silent}\}$. The asymmetry of the side information at each player generates a different packet combination to be sent by each player at each transmission round. This causes the asymmetry of the action space
of each player. Also, in each transmission, different players are associated with each coalition. All these make the payoff of each coalition unique.

\section{Convergence Analysis, Complexity, and communication overhead} \label{PA}
 This section first studies the convergence of the coalition formation algorithm and its Nash equilibrium stability. Afterward,\ignore{ we use a well-known metric that characterizes the Nash equilibriums to study the robustness of the coalition formation game.  Then,} the complexity properties of \algref{Alg1} is analyzed, which shows that \algref{Alg1} needs a low signaling overhead.
 \subsection{Convergence and Nash Equilibrium} 
In coalition formation games, the stability of the coalition structures 
corresponds to an equilibrium state known as Nash-equilibrium. This subsection proves that the convergence of the coalition formation algorithm is guaranteed and it is a Nash-stable coalition structure. 

The following theorem demonstrates that \algref{Alg1} terminates in a finite number of iterations.
\begin{theorem} \label{th:1w}
Given any initial coalition structure $\Psi_{\text{ini}}$, the coalition formation step of \algref{Alg1}
maps to a sequence of merge-and-split rules which
converges, in a finite number of iterations, to a final coalition structure $\Psi_{\text{fin}}$ composed of a number of disjoint coalitions.
\end{theorem}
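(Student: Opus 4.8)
The plan is to reduce the convergence claim to two facts: (i) every admissible merge or split transformation produces a coalition structure that is \emph{strictly} preferred under the utilitarian order $\triangleright$, and (ii) the set of all coalition structures over the finite player set $\mathcal{U}$ is itself finite. A process that strictly improves a real-valued potential over a finite state space cannot run forever, so termination follows at once.

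First I would make the order explicit. By the convention fixed before the statement, $\Psi_\tau \triangleright \Psi_{\tau-1}$ holds exactly when $\phi(\Psi_{\tau-1}) > \phi(\Psi_\tau)$, so $\triangleright$ is just the strict order on the real numbers $\phi(\cdot)$ read backwards; in particular it is irreflexive and transitive. By the Merge Operation definition, a merge is performed only when the resulting structure $\Psi_2$ satisfies $\Psi_2 \triangleright \Psi_1$, and the Split Operation definition imposes the same strict relation. Hence every elementary step $\Psi_{\tau-1} \to \Psi_\tau$ executed inside \algref{Alg1} satisfies $\Psi_\tau \triangleright \Psi_{\tau-1}$, i.e., $\phi(\Psi_\tau) < \phi(\Psi_{\tau-1})$.

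Next I would use this strict decrease to rule out cycling, and then invoke finiteness. Along any execution the values $\phi(\Psi_0), \phi(\Psi_1), \phi(\Psi_2), \dots$ form a strictly decreasing sequence of reals, so the structures $\Psi_0, \Psi_1, \dots$ are pairwise distinct: if $\Psi_i = \Psi_j$ recurred with $i<j$, transitivity would give $\phi(\Psi_i) < \phi(\Psi_i)$, a contradiction. The finiteness input is that each $\Psi_\tau$ is a partition of the $N$-element set $\mathcal{U}$ into disjoint coalitions, and the number of such partitions equals the Bell number $B_N$, which is finite. A sequence of distinct elements drawn from a finite collection must be finite; therefore the algorithm performs at most $B_N-1$ successful merge-and-split transformations before reaching a structure $\Psi_{\text{fin}}$ to which neither rule applies (were a rule still applicable, a strictly $\triangleright$-preferred structure, and hence a further step, would exist). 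By construction $\Psi_{\text{fin}}$ is a family of disjoint coalitions, which is exactly the asserted output.

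The main obstacle is not the counting argument but justifying that each individual merge or split is strict and that the algorithm's local decision is consistent with the global order $\triangleright$; this is precisely what the Merge and Split definitions encode, so the proof hinges on reading $\phi(\Psi_\tau) < \phi(\Psi_{\tau-1})$ directly off those definitions and then using transitivity to obtain the no-revisit property. A secondary subtlety deserving a sentence is that merges and splits are interleaved within one pass of the Repeat loop of \algref{Alg1}: since each operation individually yields a $\triangleright$-improvement, any finite composition of them within a single iteration still strictly decreases $\phi$, so the monotone-over-a-finite-lattice argument applies to the whole loop rather than to an isolated rule, and the total number of iterations remains bounded by $B_N$.
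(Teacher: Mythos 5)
Your proof is correct and follows essentially the same route as the paper's: each executed merge or split is, by the Merge/Split definitions, a strict improvement under $\triangleright$, hence no coalition structure can be revisited, and since the number of partitions of the finite player set $\mathcal{U}$ is finite, the sequence of structures must terminate at some $\Psi_{\text{fin}}$. Your write-up is merely more explicit than the paper's (identifying $\triangleright$ with strict decrease of the real-valued potential $\phi$, bounding the number of structures by the Bell number $B_N$, and ruling out cycles by contradiction), but the underlying argument is identical.
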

\begin{proof}
To proof this theorem, we need to show that for any merge or split rule, there exists a new coalition structure which results from the coalition formation step of \algref{Alg1}. Starting from any initial coalition structure $\Psi_{\text{ini}}$, the coalition formation step of \algref{Alg1} can be mapped to a sequence of merge/split rules. As per definition $8$ and definition $9$, every merge or split rule transforms the current
coalition structure into another coalition structure, hence we obtain the following sequence of coalition structures
\begin{align}
\Psi_{\text{ini}}\rightarrow \Psi_{1} \rightarrow \Psi_{\text{2}} \rightarrow \ \cdots \ \Psi_{\text{fin}}
\label{eq88}
\end{align}

where $\Psi_{i+1} \triangleright \Psi_{i}$, and $\rightarrow$ indicates the occurrence of a merge-and-split rule. Since the Pareto Order introduced in definition $6$ is irreflexive, transitive and monotonic, a coalition structure cannot be revisited. Given the fact that the number of merge and split rules of a finite set is \textit{finite} and the merge/split operations-coalition structure mapping, the number of coalition structure sequences in \eref{eq88} is finite. Therefore, the sequence in \eref{eq88} always terminates and converge to a final coalition structure
 $ \Psi_{\text{fin}}$.
\end{proof}

\ignore{Now, let us analyze the stability of the final coalition structure  $ \Psi_{\text{fin}}$ which results from  \algref{Alg1}.}
\begin{definition} A coalition structure $\Psi=\{\mathcal{S}_1,\ \cdots \ ,\mathcal{S}_k \}$ is Nash-stable if players have no incentive to leave $\Psi$ through merge-and-split operations.
\end{definition}

This definition implies that any coalition structure
$\Psi$ is considered as a Nash-stable coalition structure if and only if no player has an incentive to move from its current coalition and join another coalition or make an individual decision by performing any merge/split rules. Further, the coalitions in the final coalition structure $\Psi_\text{fin}$ have no incentive to do more merge and split operations. A Nash-stable coalition structure is also an individually stable coalition structure. In general, in a coalition formation game, Nash-stability is a subset of individual stability \cite{44}. Specifically, no player leaves its current coalition through a split rule and form an empty coalition, i.e., no singleton coalition is formed if the following property holds.

\begin{property} \label{pr1}
There exists at least one coalition structure $\Psi$ that satisfies both Nash-stability
and individual stability if and only if $\forall\mathcal{S}_s\in \Psi$ such that $|\mathcal{S}_s|>1$.
\end{property}
\begin{IEEEproof}
This property states that forming a singleton coalition cannot happen. Indeed, since each player cannot send an encoded packet to itself, it believes that a better payoff can be obtained by being a member of any coalition. Further, since the payoff of a non-targeted player in any coalition and a single player-coalition is the same, our proposed algorithm, as mentioned in the previous section, avoids making any merge-and-split rules for equal payoff values. Thus, according to  \algref{Alg1}, a Nash-stable and individual stable coalition structure can be obtained.
\end{IEEEproof}

As a consequence of  Property 2, the final coalition structure $\Psi_\text{fin}$ that results from \algref{Alg1} is $\mathbb{D}_\text{hp}$ stable as the coalitions have no incentive to do further merge-and-split operations. $\mathbb{D}_\text{hp}$  stable is also known as merge-and-split proof \cite{44}. Furthermore, $\Psi_\text{fin}$ can be considered as $\mathbb{D}_\text{c}$ stable. This is because players have no incentive to leave $\Psi_\text{fin}$ and form any other coalitions \cite{26}. 

To illustrate the above concepts, consider the resulting coalition structure $\Psi_\text{fin}=\{\mathcal{S}_1, \mathcal{S}_2\}$ that shown in Fig. \ref{fig2}. The coalition structure $\Psi_\text{fin}$ is Nash-stable as no player has an incentive to leave its current coalition. For example, player $u_5$
has a payoff of $\phi_5(\mathcal{S}_2)=-2$ when being part of the coalition $\mathcal{S}_2 = \{u_1, u_2, u_3, u_5\}$. The payoff $\phi_5(\mathcal{S}_2)$ is calculated as follows. Since player $u_5$ receives an IDNC encoded packet from player $u_1$, it does not experience any decoding delay increases. Thus, by \eref{eq85}, its anticipated completion time is $\mathcal{T}_5(\underline{\rm a}_t) = \frac{|\mathcal{W}^{(0)}_5|+ \mathbb{D}_5(\underline{\rm a}_t)-\mathbb{E}[\sigma_5]}{1-\mathbb{E}[\sigma_5]}=2$, and, by \eref{eq5} its payoff is $-2$.  If player $u_5$ switches to act non-cooperatively and joins $\mathcal{S}_1$, player $u_6$ would be the new transmitting player in $\mathcal{S}_1$. In this case, player $u_5$ will be in the coverage zone of both transmitting players $u_1$ in $\mathcal{S}_2$ and $u_6$ in $\mathcal{S}_1$. Consequently, the payoff of player $u_5$ decreases to $\phi_5(\mathcal{S}_1)=-3$, and the payoff of player $u_6$ decreases from $\phi_6(\mathcal{S}_1)=-3$ to
$\phi_6(\mathcal{S}_1)=-4$. Thus, player $u_5$ does not deviate form its current coalition $\mathcal{S}_2$ and join $\mathcal{S}_1$. Similarly, if players $u_2$ and $u_3$ act non-cooperatively by leaving $\mathcal{S}_2$ and forming a singleton coalition for each, i.e., $\mathcal{S}_3$ and $\mathcal{S}_4$, their payoffs decrease from $\phi_2(\{2\})=-2$ and $\phi_2(\{3\})=-1$ to $\phi_2(\mathcal{S}_3)=-3 $ and $\phi_3(\mathcal{S}_4)=-2$, respectively. Clearly, $\Psi_\text{fin}$ is an individual Nash-stable as it does not have any singleton coalition. Further,  it is both $\mathbb{D}_\text{hp}$ and $\mathbb{D}_\text{c}$ stable as no further merge-and-split operations can be performed by the coalitions and no player has incentive to deviate from $\Psi_\text{fin}$, respectively. 

\ignore{\subsection{Robustness of the Proposed Solution}
Nash Equilibrium is not sufficient to guarantee the robustness of the coalition formation game formulation. To characterize the equilibrium of a game, we use a
well-known metric, introduced in \cite{46},  is called the price of anarchy (POA). 

\begin{definition} The price-of-anarchy (PoA) is defined as the worst-case
efficiency of a Nash Equilibrium among all possible strategies. Such PoA of one coalition and of the whole game at any stage can be expressed, respectively, as follows
\begin{align}
&POA_{\mathcal{S}}=\dfrac{\max_{s\in{\mathcal{ST}}}W(s)}{\min_{s\in{\mathcal{NE}}}W(s)}
\text{ and } POA= \dfrac{\max_{\mathcal{S}\in\Psi}POA_{\mathcal{S}}}{\min_{\mathcal{S}\in\Psi}POA_{\mathcal{S}}},
\end{align}
where $\mathcal{ST}$ represents the set of all possible strategies at the game stage, $\mathcal{NE}$ denotes the set of all NEs at the game stage, $W$ : $\mathcal{S}\rightarrow \mathbb{R}$ is a fairness function.

\end{definition}
The notion of PoA in game theory is used to measure how the efficiency of a game degrades. This game's degradation due to the selfish behavior of its players.
In this paper, the payoff values are the same for all the players, i.e., the completion time in \eref{eq5}. Thus, the PoA can be reformulated as follows
\begin{equation} 
\begin{split}
POA= \dfrac{\max_{s\in{\mathcal{NE}}}\phi(s)}{\min_{s\in{\mathcal{NE}}}\phi(s)}.
\end{split}
\end{equation}}

\subsection{Complexity Analysis and Communication Overhead}
This section analyzes the computational complexity and communication burden of  \algref{Alg1}.\\
\textit{Computational Complexity:} Each player at any game stage needs to find the optimal IDNC packet combination, which depends on the packets that it possesses. Further, since a game with incomplete information, i.e., each player knows only the side information of players in its coverage zone, every player can generate the IDNC packet combinations of all other players in its coverage zone. This allows every player to calculate the payoff function \eref{eq5} of all other players in its coverage zone. 

The complexity of generating an optimal IDNC packet using a maximum weight search method is explained as follows. First, the BS generates the vertcies of $O(NM)$, and it connects them by edges that represent network coding conditions of $O(N^2M)$. Then, the BS executes the maximum weight search method that computes the weight of $O(NM)$ vertcies, and selects maximum $N$ users. Hence, the overall complexity of finding the optimal IDNC packet is $O(NM)+O(N^2M)+O(NM*N)=O(N^2M)$ \cite{12m}. In our case, the complexity is bounded by $O(N^2M)$ since the number of players in the coverage zone of each player is less than the total number of players.

\textit{ Communication Overhead:} \ignore{While in the complexity part we analyzed the computation burden for each player to find the IDNC packet, the communication overhead part analyzes the signaling overhead that needs to run Algorithm \ref{Alg1}.} The communication overhead of \algref{Alg1} is related to perform the members' discovery step, coalition heads selection, and the analysis of merge-and-split rules, which is associated with the total number of coalition formations. 

First, similar to many algorithms in the literature, e.g., \cite{44aa}, the member discovery step needs $|N|$ 2-byte messages, in which each message is being sent to all neighbor players which is denoted by $\mathbf{U}$. Thus, the total communication overhead for discovering the neighbor players is $|2N\mathbf{U}|$ bytes.

Second, coalition head selection can be performed in many different strategies, e.g.,  based on players' attributes \cite{50m}, \cite{51m}. In \algref{Alg1}, players in each coalition initially select their coalition head by exchanging an advertisement message among them, and the one that satisfies the conditions C1 and C2 in \sref{DC} would be chosen. The same process is applied for selecting/updating the coalition head in step III. Being a player connected to most players in the coalition, the coalition head is responsible for ensuring that the rest of the coalition's members received an acknowledgment (ACK). As such, they can update their side information after each D2D transmission.

Third, the communication overhead of the coalition formation step is based on the number of merge-and-split rules, which is mainly related to the total number of decisions made by each of the $N$ players. As previously mentioned, the merge-and-split operations enumerate only the neighbor coalitions $\mathtt{S}_s$.\ignore{ Therefore, most players tend to form coalitions of sizes less than $N/2$.} Thus, two extreme cases can occur.
\begin{itemize} 
\item If all coalitions' players decide to leave their current coalitions and join other coalitions. In this case, each player $u$ in coalition $\mathcal{S}_s$ would make $|\mathtt{S}_s|$ decisions (player $u$ has an $|\mathtt{S}_s|$ possibilities to join any of the neighbor coalitions). Consequently, the total number of players' decisions is  $\mathcal{Q}_{\text{worst}}=N|\mathtt{S}_s|$, and the overhead complexity is of the order $O(N|\mathtt{S}_s|)$.

\item If players did not make any decisions.  Since no decision is made by players, the overhead in this case is only $\mathcal{Q}_{\text{best}}=N$ (due to the initial player-coalition associations as in step I), and a complexity order of $O(N)$. 
\end{itemize}
In practical, the number of players' decisions is between the above two cases, i.e., $\mathcal{Q}_{\text{best}}\leq\mathcal{Q}\leq \mathcal{Q}_{\text{worst}}$. Hence, if $\mathcal{L}$ average decisions are made by players, then $\mathcal{Q}=N|\mathcal{L}|$ decisions that perform split-and-merge rules are needed until \algref{Alg1} converges.

Therefore, combining all the signaling overhead components, the total overhead is $N(2\mathbf{U}+|\mathcal{L}|)$. Such signaling overhead will add only a few bytes, which are negligible in size compared to the entire packet's size. Furthermore, to update the \textit{Has} and \textit{Wants} sets of players, only the indices of packets needed for the communication between the players, not their contents. Hence, we ignore signaling overhead factor because it is first constant (independent on the completion time and decoding delay) and that its size is negligible. 
\ignore{\begin{table}
       \caption{Numerical parameters}
       \renewcommand{\arraystretch}{0.99} \label{table:parameters}
       \centering
   \begin{tabular}{|p{6.5cm}|p{5cm}  |}    \hline
   
  Solution &  \\ \hline
   Interference-free D2D  & Perfect \\
   \hline
   Point-to-Multi-Point & Static \\ \hline
   
   Fully connected  D2D  & Random \\
   \hline
      One colaition game   & Uniform   \\ 
      \hline
    Proposed scheme   & Uniform   \\ 
         \hline  
   
    \end{tabular}
\end{table}}

\section{Numerical Results} \label{SR}
In this section, we evaluate the performance of our proposed coalition formation game (denoted by CFG partially-connected D2D) to demonstrate its capability of reducing the completion time compare to the baseline schemes. We first introduce the simulation setup and the comparison schemes. Then, the completion time and game performances are investigated, respectively.
\subsection{Simulation Setup}
We consider an IDNC-enabled partially connected D2D network where players are uniformly re-positioned
for each iteration in a $500$m$\times500$m cell with connectivity index
$\mathbf{C}$, which is defined as the ratio of the average number of neighbors to the total number of players $N$. A simple partially connected D2D network setting is plotted
in Fig. \ref{figsm} for the presented example in Fig. \ref{fig1}. The system setting in this paper follows the setup studied in \cite{29m},\cite{30m}. The initial side information $\mathcal{H}_u$ and $\mathcal{W}_u$, $\forall u\in \mathcal{U}$ of players is independently drawn based on their average erasure probability. The short-range communications are more reliable than the BS-player communications \cite{26m}, \cite{28m}. Hence, unless specified, we assume that the player-to-player erasure probability $\sigma$ is half the BS-to-player erasure $\epsilon$ in all simulations, i.e., $\sigma=0.5\epsilon$. Our simulations were implemented using Matlab on a Windows $10$ laptop $2.5$ GHz Intel Core i7 processor and $8$ GB  $1600$ MHz DDR3 RAM. For the sake of comparison, we implement the following schemes.

\ignore{\begin{figure}[t!]
\centering
\includegraphics[width=0.4\linewidth]{./fig/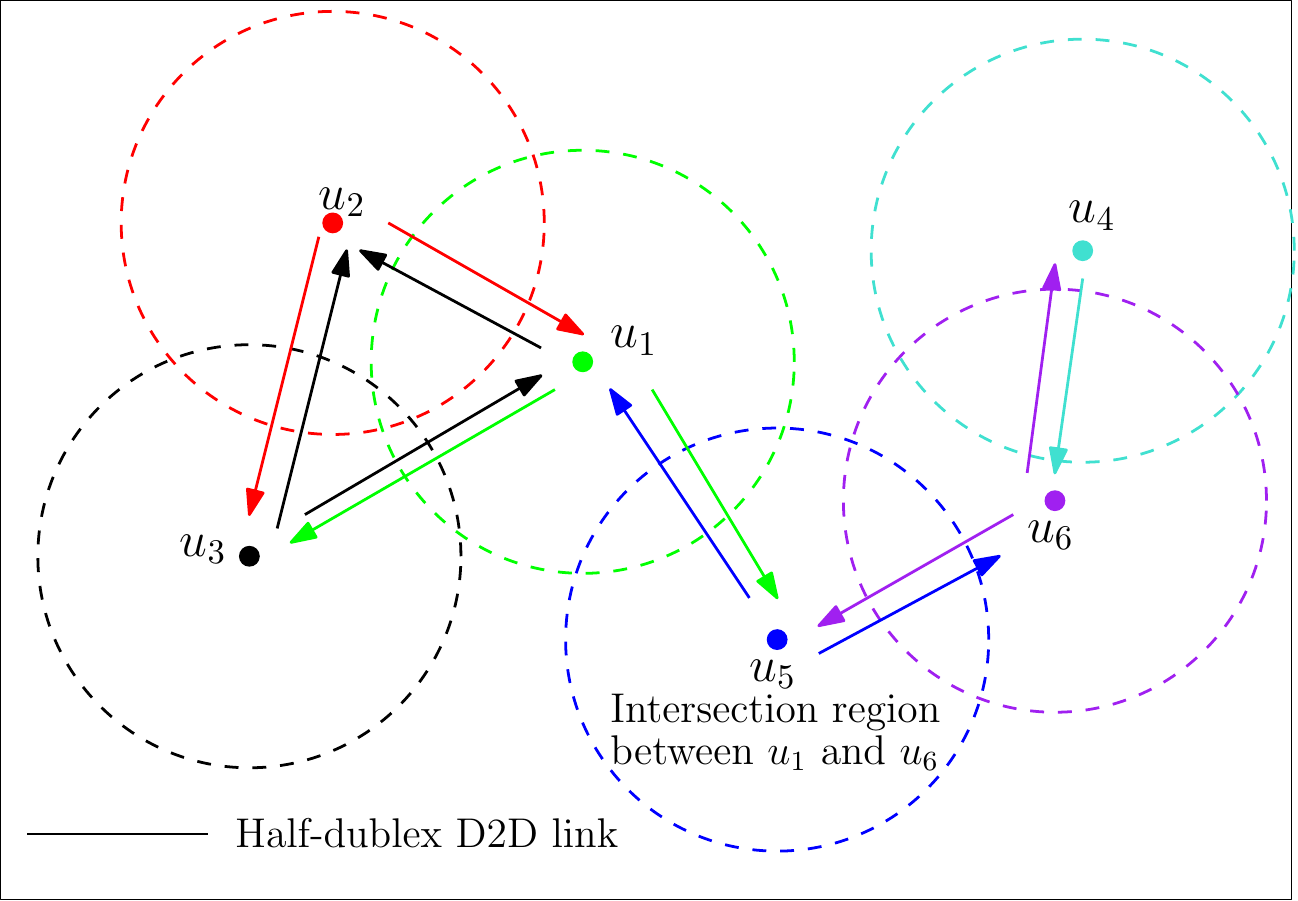}
\caption{A partially connected D2D network of the example presented in Fig. \ref{fig1}.}
\label{figsm}
\end{figure}}

\begin{itemize}
\item The fully-connected D2D system in which a single user who has the largest number of received packets transmits an IDNC packet at each round.
\item The PMP system in which the BS is responsible for the transmissions. The BS holds all the requested packets and can serve all the users. This scheme was proposed in \cite{16m}.
\item The one coalition formation game in a partially connected D2D  (denoted by OCF partially-connected D2D). In this scheme, only one coalition is formed, and a single player transmits an IDNC packet at each round. The transmitting player is selected based on its number of received packets as well as on the maximum number of players in its coverage zone.    
\item The partially D2D in FRANs (denoted by FRAN partially-connected D2D). In this scheme, a fog central unit is responsible for determining the set of transmitting users and the packet combinations. This scheme was proposed in \cite{30m}.
\end{itemize}

\begin{figure}[t!]
    \centering
    \begin{minipage}{0.494\textwidth}
        \centering
   \includegraphics[width=0.7\textwidth]{NT.pdf} 
        \caption{A partially connected D2D network of the example presented in Fig. \ref{fig1}.}
        \label{figsm}
    \end{minipage}\hfill
    \begin{minipage}{0.494\textwidth}
        \centering
    \includegraphics[width=0.7\textwidth]{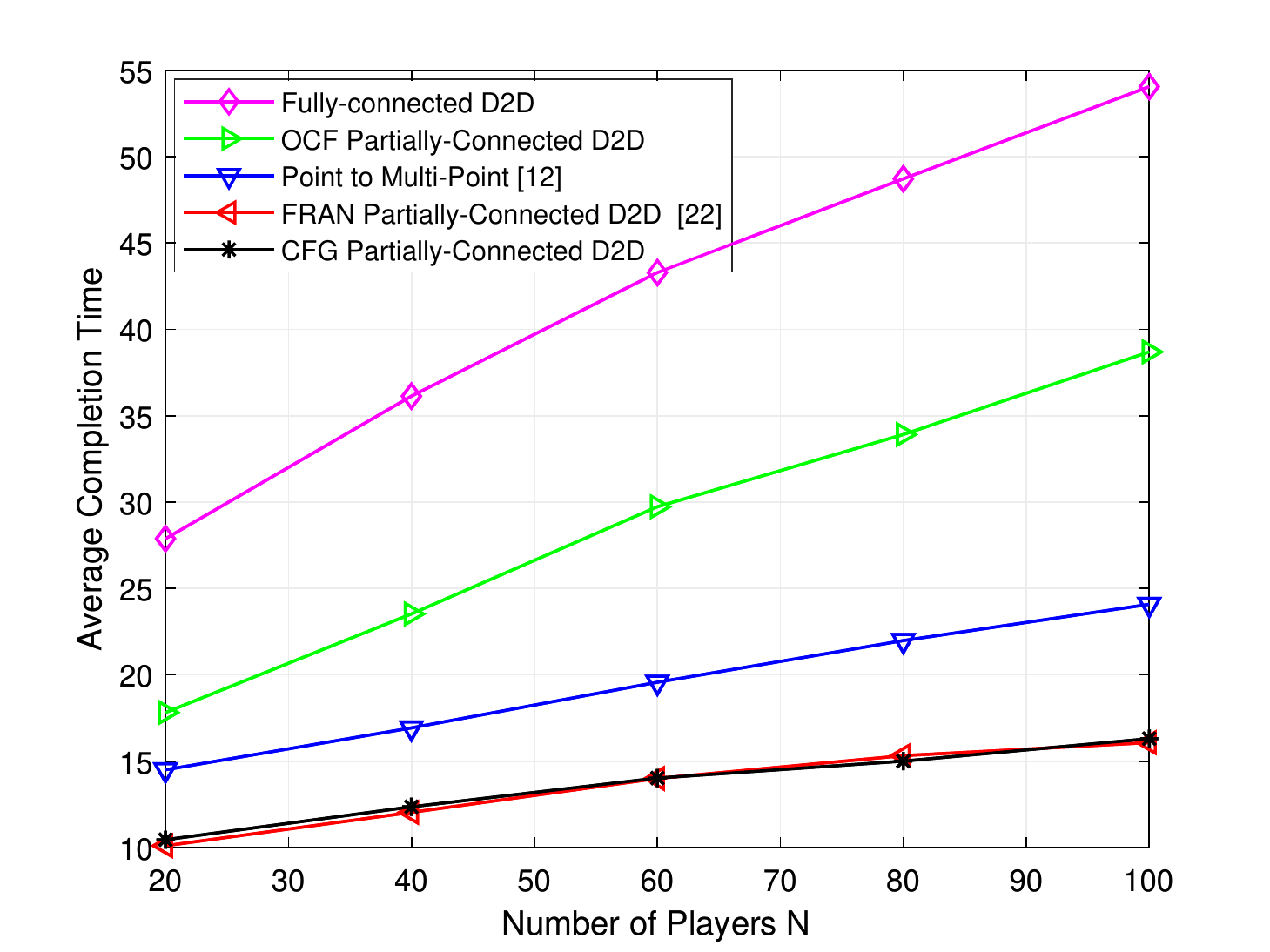} 
        \caption{Average completion time as a function of the number
        of players $N$.}
        \label{fig4}
     \end{minipage}
     \centering
        \centering
            \begin{minipage}{0.494\textwidth}
                \centering
           \includegraphics[width=0.7\textwidth]{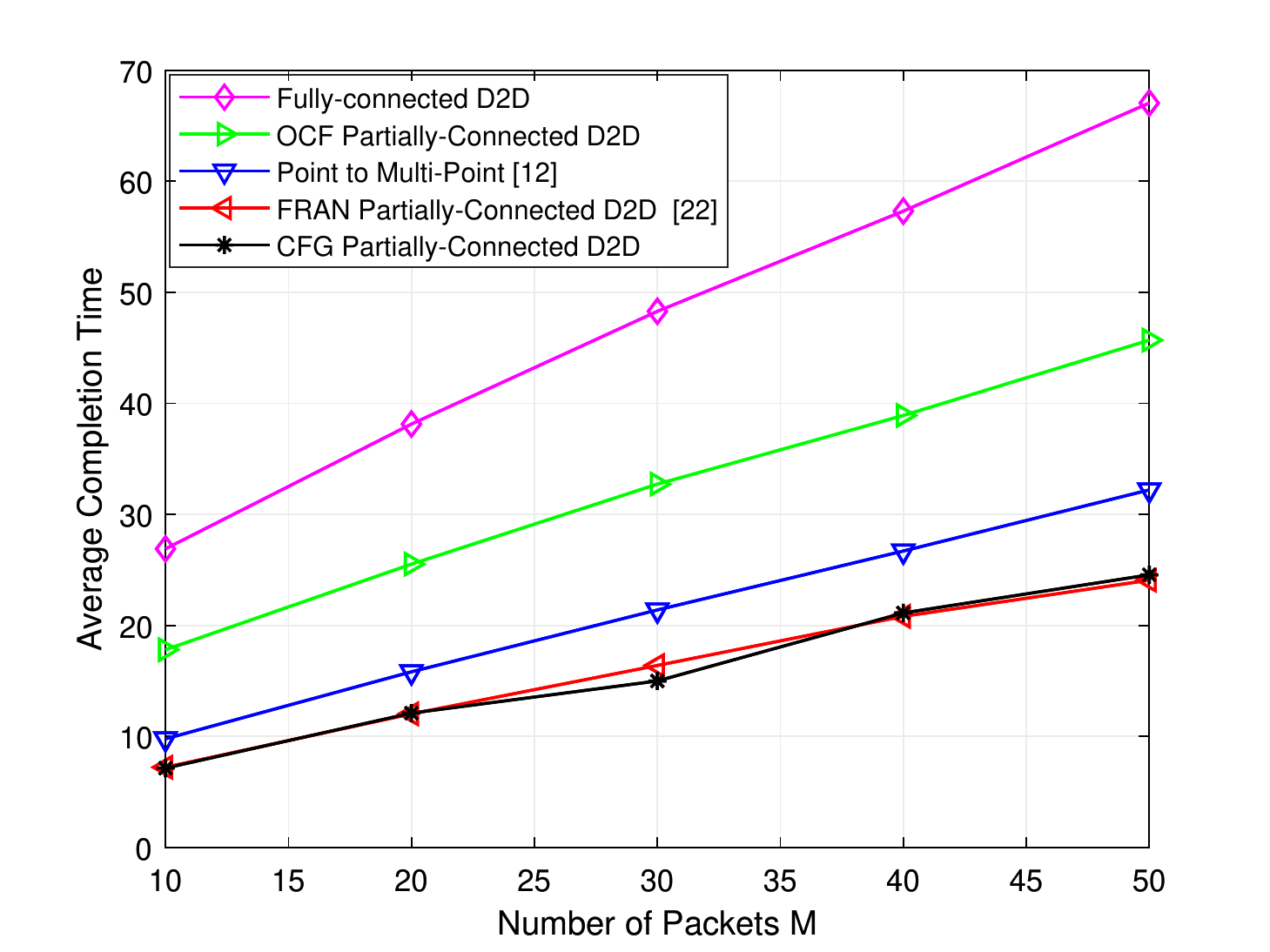} 
                \caption{Average completion time  as a function of the  number of packets $M$.}
                \label{fig5}
            \end{minipage}\hfill
            \begin{minipage}{0.494\textwidth}
                \centering
            \includegraphics[width=0.7\textwidth]{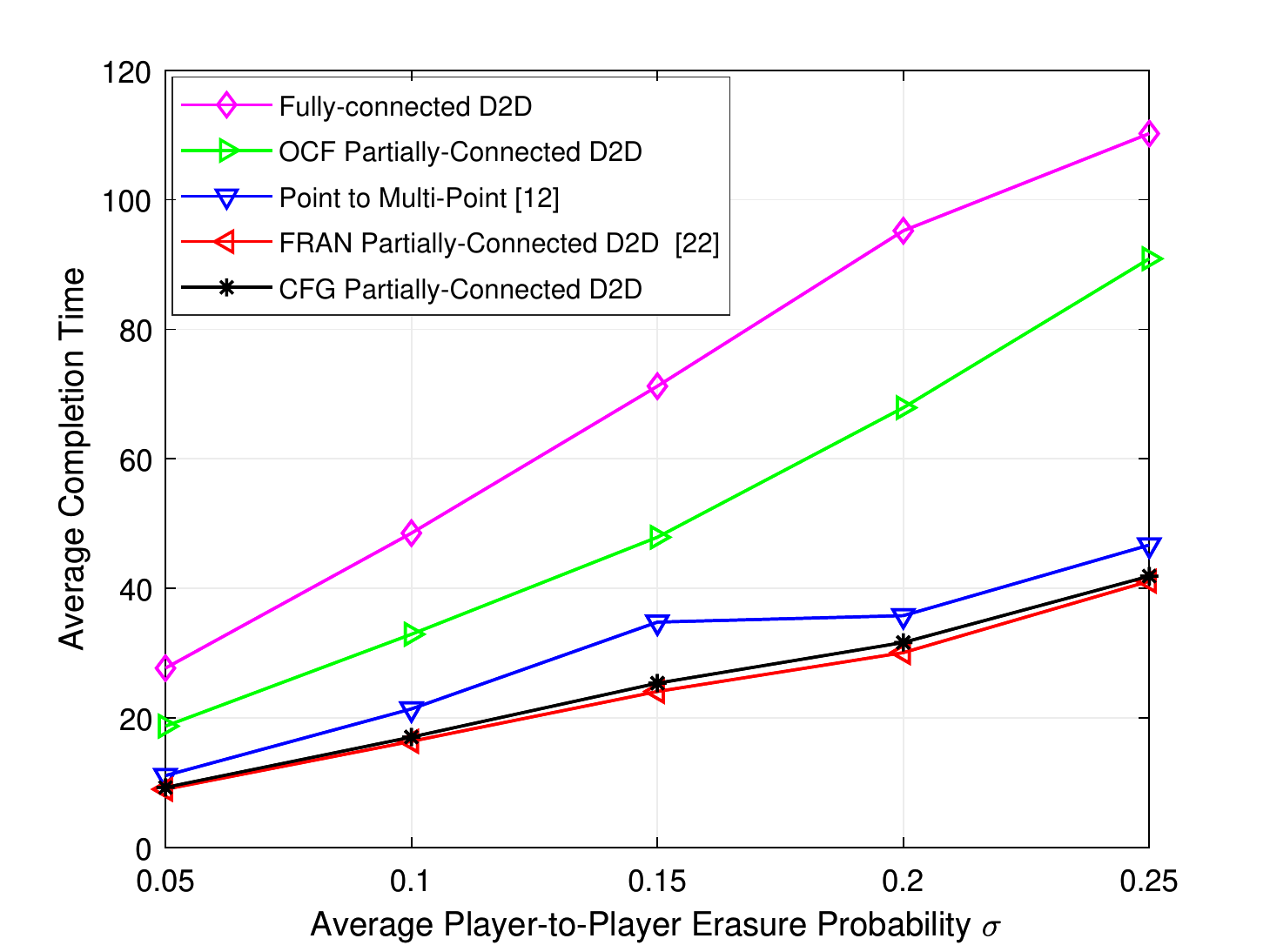} 
                \caption{Average completion time  as a function of the average player-player erasure probability $\sigma$.}
                \label{fig6}
             \end{minipage}
     
\end{figure}
\subsection{Completion Time Performance Evaluation}
To study the completion time performance of the proposed solution, we change the number of players, packets, connectivity index, and the packet erasure probability. \ignore{The completion time is calculated over a
certain number of iterations, and the average values are presented.}

In Fig. \ref{fig4}, we depict the average completion time as a function of the number of players $N$ for a network composed of $M = 30$ packets, $\epsilon=0.25$, $\sigma=0.12$, and connectivity index  $C=0.4$. It is observed from Fig. \ref{fig4} that the proposed CFG partially-connected D2D algorithm outperforms the PMP, fully-connected D2D, and OCF partially-connected D2D schemes for all simulated number of players. This is because of the simultaneous IDNC packet transmissions from cooperating players at the same time. In particular, the fully-connected D2D system only considers the size of the \textit{Has} set as a metric to select a single player for transmission at each round, i.e., $a^*=\max\limits_{a\in \mathcal{U}} \mathcal{H}_a$. The OCF partially-connected D2D scheme focuses on the maximum number of connected players to be formed as well as on the size of the \textit{Has} set of the transmitting player. On the other hand, although the transmitter in the PMP scheme can encode all the IDNC combinations and target a certain number of players, the PMP scheme sacrifices the utility of the simultaneous transmissions by considering only one transmission. Our proposed algorithm strikes a balance between these
aspects by jointly considering the number of targeted players and the \textit{Has} set size of each transmitting player.  Despite the gain achieved by the FRAN partially-connected D2D solution with the presence of a fog that executes the whole process, our decentralized solution reaches the same performance. Clearly, due to the philosophy of the D2D simultaneous transmissions that both schemes have proposed, their performances are roughly the same.

We observe from Fig. \ref{fig4} that, for a small number of players, the PMP system is close to both the CFG partially-connected D2D and FRAN partially-connected D2D schemes. This is because, for a small number of players ($N\leq 60$), the certainty that the whole frame $M$ is distributed between players in the initial transmissions is low, thus decreasing the probability of exchanging potential IDNC packets between players. This makes the overall completion time performance of the partial D2D scenarios close to the PMP scheme. As the number of players increases ($N\geq 80$), the bigger the certainty that the union of their \textit{Has} sets is equal to $M$. This results in more potential D2D IDNC packet exchange, thus increasing the gap between the PMP performance and both the FRAN partially-connected D2D and proposed schemes. 

In Fig. \ref{fig5}, we  illustrate the average completion time as a function of the number of packets $M$ for a network composed of $N= 30$ players, $\epsilon=0.25$, $\sigma=0.12$, and connectivity index $C=0.4$. The figure shows that the proposed scheme
outperforms the fully connected, one coalition game, and PMP schemes. For a few packets, the IDNC combinations
are limited which affect the ability of the
proposed scheme to generate coded packets that satisfy number of players. With increasing the number of packets, the number of transmissions needed for the completion for the aforementioned schemes is remarkably increasing. Therefore, as the number of packets
increases, the proposed scheme outperforms largely the fully connected and one coalition game schemes. We see from Fig. \ref{fig4} that the completion time of all schemes linearly increases with the number of packets. This is expected as the number of packets increases, a high number of transmissions is required towards the completion. This results in increasing the average completion time.

\ignore{\begin{figure}
    \centering
    \begin{minipage}{0.494\textwidth}
        \centering
      \includegraphics[width=0.71\textwidth]{./fig/figp01.eps} 
        \caption{Average completion time  as a function of the average player-player erasure probability $\sigma$ for a poorly connected network.}
        \label{fig7}
    \end{minipage}\hfill
    \begin{minipage}{0.494\textwidth}
        \centering
       \includegraphics[width=0.71\textwidth]{./fig/figp04.eps} 
        \caption{Average completion time  as a function of the average player-player erasure probability $\sigma$ for a moderately connected network.}
        \label{fig8}
     \end{minipage}
\end{figure}}

In Fig. \ref{fig6}, we plot the average completion time as a function of the average player-player erasure probability $\sigma$ for a network composed of $N= 60$, $M=30$, $\epsilon=2\sigma$, and $C=0.4$. Similar to what we have discussed in the above figures, the average completion time of the partial D2D solutions is noticeable compared to the fully-connected D2D and OCF partially-connected D2D schemes, as shown in Fig. \ref{fig6}. We clearly see that the completion time of the partial D2D schemes is better than the PMP one because of their multiple players' transmissions at each round. Moreover, as the player-to-player erasure probability increases, the BS-player erasure probability increases two-fold ($\epsilon=2\sigma$), thus slightly affecting the performance of the PMP scheme. The partial D2D settings, however, benefit from short range and reliable communications which provide much better players reachability and IDNC packet successful delivery compared to the PMP setting. 

\begin{figure}[t!]
    \centering
    \begin{minipage}{0.494\textwidth}
        \centering
       \includegraphics[width=0.61\textwidth]{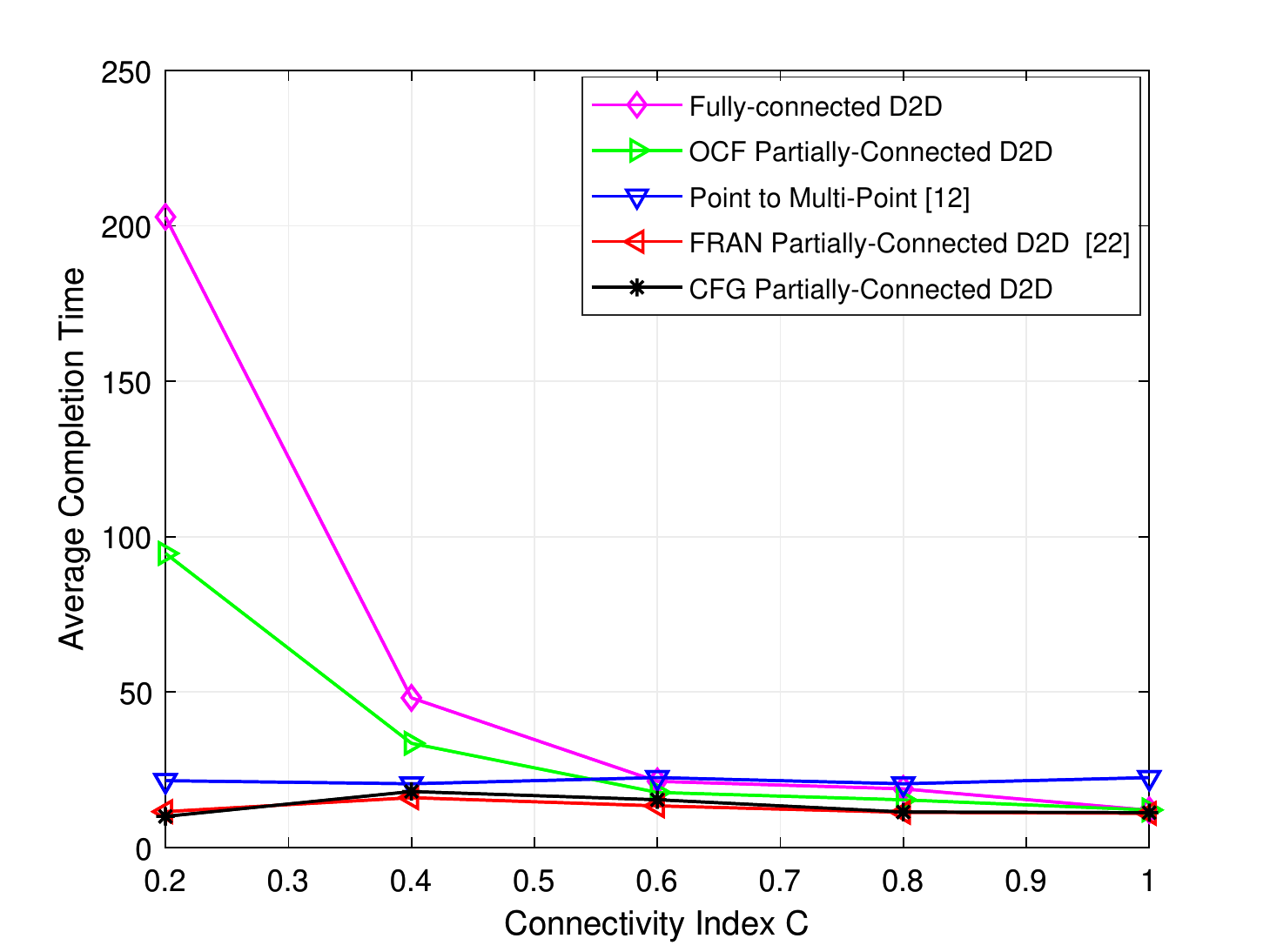} 
        \caption{Average completion time  as a function of the connectivity index $C$.}
        \label{fig9}
    \end{minipage}\hfill
    \begin{minipage}{0.494\textwidth}
        \centering
       \includegraphics[width=0.61\textwidth]{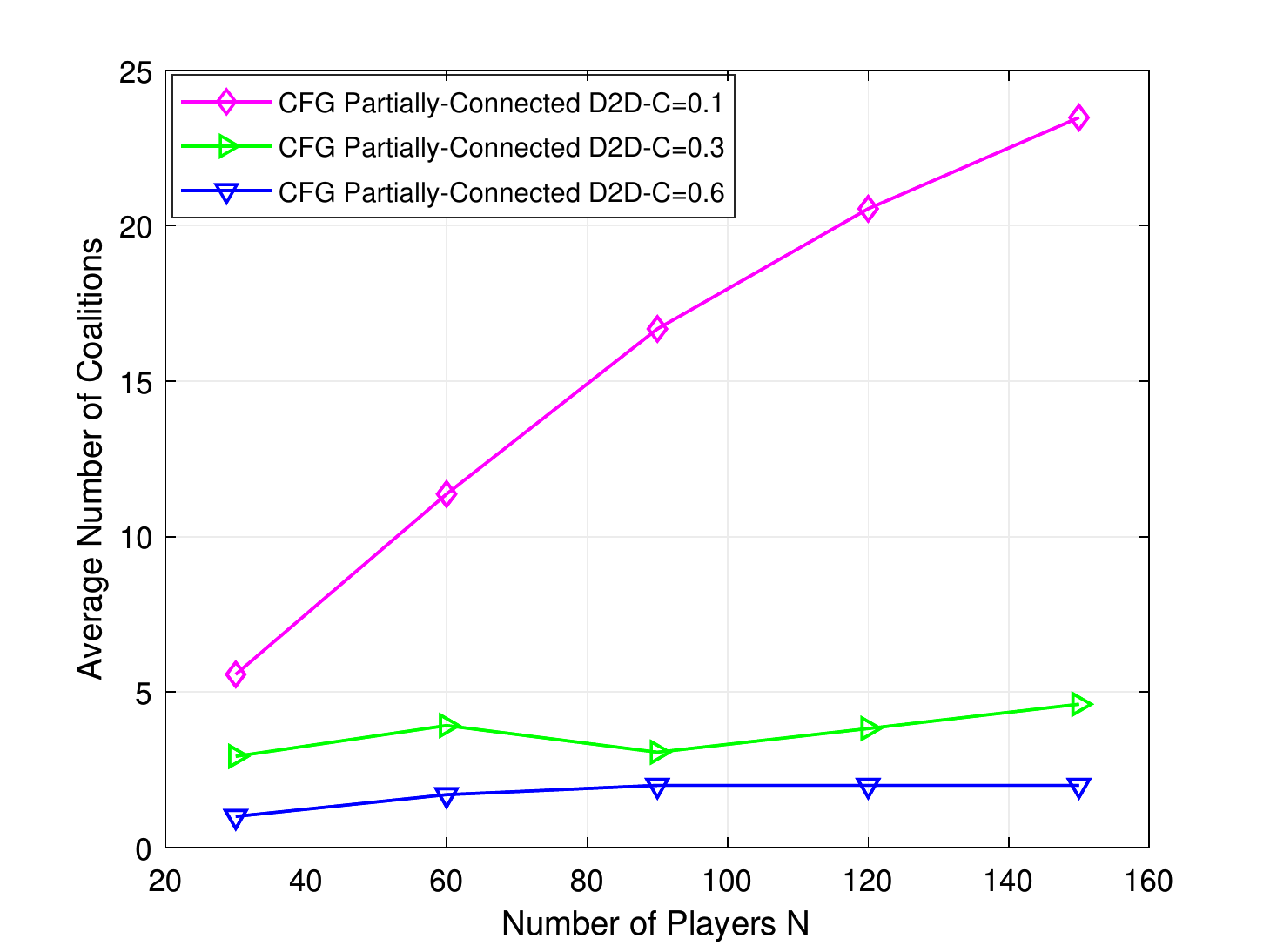} 
        \caption{Average number of coalitions  as a function of the number of players $N$.}
        \label{fig10}
     \end{minipage}
\end{figure}

In Fig. \ref{fig9}, we investigate the average completion time as a function of the connectivity index $C$
for a network composed of $N= 60$, $M=30$, $\epsilon=0.25$, and $\sigma=0.12$. It can clearly be seen that for a low connectivity index $(C \leq 0.4)$, the proposed CFG partially-connected D2D approach noticeably outperforms the fully-connected D2D and OCF partially-connected D2D approaches. In such poorly connected networks $(C \leq 0.4)$, multiple simultaneous players' transmissions are exploited in partially D2D algorithms. However, as the connectivity index increases $(C \geq 0.6)$, the number of formed disjoint coalitions in our proposed solution is drastically reduced, thus reducing the number of transmitting players. This results in a performance agreement with the fully-connected D2D scheme.  Being independent of the coverage zones of the transmitting players and the delay created by those players, the PMP scheme is not affected by the changes to $C$. Thus, the PMP scheme has constant average completion time.
\ignore{

For moderately connected networks in Fig. \ref{fig4} and Fig. \ref{fig6}, the proposed CFG partially-connected D2D still outperforms the fully-connected D2D and the OCF partially-connected D2D schemes for all simulated numbers of players and packets. Due to the high connectivity between players, the number of formed coalitions is small, thus decreasing the number of transmitting players in partial D2D scenarios.  Consequently, the completion time performance of the partially connected networks, i.e., CFG partially-connected D2D and FRAN partially-connected D2D, is degraded and expected to be close to the performance of the fully-connected D2D and OCF partially-connected D2D schemes.}

\begin{table}
       \caption{The influence of changing $\sigma$ on the completion time performance of the proposed scheme}
       \renewcommand{\arraystretch}{0.8} \label{table1}
       \centering
\tabcolsep=0.11cm
     \begin{tabular}{|p{5.2cm}|p{1.6cm}  |p{1.6cm} |p{1.6cm}| p{1.6cm}|}    \hline
    
    Solution & $\sigma=0.6\epsilon$ & $\sigma=0.7\epsilon$ & $\sigma=0.9\epsilon$ & $\sigma=\epsilon$  \\ \hline
   Point to Multi-Point & 30.2900 & 30.2800 & 30.3100 & 30.4800
      \\
     \hline
     CFG partially-connected D2D & 20.1800& 23.4702 & 30.4500& 33.9300\\ \hline
    
     \end{tabular}
  
\end{table}

To conclude this section, we study the influence of the setting $\sigma=0.5\epsilon$ on the completion time performance of our proposed scheme. In \tref{table1}, we summarize the completion time perfromance for different values of $\sigma$. The considered network setup has $30$ players,
$20$ packets, $\epsilon=0.5$, and $C=0.1$. From \tref{table1}, we note that the completion time of our proposed solution still outperforms the PMP scheme for $\sigma=0.7\epsilon$ and approximately reaches the same performance as for the PMP scheme for $\sigma=0.9\epsilon$. This is due to the simultaneous transmissions and cooperative decisions by the transmitting players, which show the potential of the proposed CFG solution in minimizing the completion time of users.
\subsection{Proposed CFG Perfromance Evaluation}

To quantify the analysis of the proposed formation coalition solution, we plot in Fig. \ref{fig10} the average number of coalitions as a function of the number of players $N$ for a network composed of $M= 30$, a different connectivity index  ($C= 0.6$, $C= 0.3$, and $C= 0.1$), and $\sigma=0.12$. Fig. \ref{fig10} shows that the average coalition size increases with the increase in the number of players. This is because, as $N$ increases, the number of cooperating players increases, thus increasing the average size of the formed coalitions. We can conclude from Fig. $\ref{fig10}$ that the resulting coalition structure $\Psi_\text{fin}$ from Algorithm \ref{Alg1} is composed of a small number of relatively large coalitions when $C= 0.6$. When $C= 0.1$, this number of formed coalitions increase and the resulting coalition structure $\Psi_\text{fin}$ is composed of a large number of small coalitions' sizes.

\begin{table}
       \caption{Average Running Times of the different schemes}
       \renewcommand{\arraystretch}{0.8} \label{table2}
       \centering
       \tabcolsep=0.11cm
  \begin{tabular}{|p{5.4cm}|p{4.3cm}  |p{4.3cm}  |}    \hline
    
   Solution &  Time(s)- Small network& Time(s)- Large network \\ \hline
      FRAN partially-connected D2D  & 0.561893 & 15.98450\\
      \hline
      Point to Multi-Point & 1.994500 & 1103.020716 \\ \hline
      
      Fully-connected D2D  & 0.756420 & 128.772580 \\
      \hline
         OCF partially-connected D2D   & 0.783575 & 28.726515   \\ 
         \hline
       CFG partially-connected D2D   & 0.736737 & 21.725739   \\ 
            \hline

     \end{tabular}
\end{table}
In \tref{table2}, we evaluate the complexity of the proposed coalition game solution as a function of the algorithmic running time. In particular, \tref{table2} lists the consumed time of MATLAB to execute all schemes in different network setups since starting the algorithms until all players receive their wanted packets. The considered small network setup has $30$ players,
$20$ packets, $\epsilon=0.5$, $\sigma=0.25$, and $C=0.1$. The considered large network setup has $100$ players,
$70$ packets, $\epsilon=0.5$, $\sigma=0.25$, and $C=0.1$. It can clearly be seen from the table that the proposed CFG-partially D2D scheme needs low consumed time than all other solutions for both network setups. Although the completion time achieved by the CFG partially-connected D2D scheme is roughly the same as the centralized FRAN partially-connected D2D, the computing time required by our developed scheme is slightly higher than that required by the FRAN partially-connected D2D. This is because our proposed scheme needs time to converge before generating the output. The centralized FRAN scheme has low execution time due to the presence of the fog entity.

Finally, to evaluate the convergence rate analysis of the proposed scheme, the average number of merge-and-split
rules before Algorithm~\ref{Alg1} converges to the final coalition structure is listed in \tref{table3}. To achieve the stable coalition with our proposed CFG scheme, network setup $1$ requires on average $16$ iterations, and network setup $2$ needs on average $22$ iterations.  These results show that our proposed distributed algorithm is robust to different network setups. In summary, these results show that our proposed
algorithm allows D2D users to form stable coalitions with a good convergence speed, which further confirm the theoretical findings in \thref{th:1w}.
\begin{table}
       \caption{Average Number of Coalitions and Split/merge rules of the proposed scheme in the first iteration}
       \renewcommand{\arraystretch}{0.8} \label{table3}
       \centering
   \begin{tabular}{|p{5.5cm} |p{3.7cm}  |p{3.7cm}  |}    \hline
  
  Network Setup & Number of Coalitions & Split-and-merge rules \\ \hline
   Setup 1: $N=100$ and $C=0.1$  & 16.34 & 8.12 \\
   \hline
   Setup 2: $N=160$ and $C=0.1$ &  23.67 & 12.76 \\ \hline
   \end{tabular}
\end{table}

\section{Conclusion} \label{CC}
This paper has developed a distributed game-theoretical framework for a partially connected D2D network using coalition game and IDNC optimization. As such, the completion time of users is minimized. In particular, our proposed model is formulated as a coalition formation game with nontransferable utility, and a fully distributed coalition formation algorithm is proposed. The proposed distributed algorithm is converged to a Nash-stable coalition structure using split-and-merge rules while accounting for the altruistic players' preferences. With such a distributed solution, each player has to maintain a partial feedback matrix only for the players in its coverage zone instead of the global feedback matrix required in the
fully connected D2D networks. A comprehensive completion time and game performances evaluation have been carried out for the proposed distributed coalition game. In particular, our performance evaluation results comprehensively demonstrated that our proposed  distributed solution offers almost same completion time performance similar to centralized FRAN D2D network. \ignore{ Numerical results demonstrate that the proposed distributed solution provides appreciable completion time performance gains compared to the conventional point-to-multipoint and fully connected D2D networks. Compared to the centralized FRAN D2D network, our proposed solution offers roughly the same performance. }
\begin{thebibliography}{10}
\ignore{\bibitem{1} 
Cisco visual networking index: Global mobile data traffic forecast update, 2013-2018, Feb. 5, 2014. [Online]. Available: http://www.cisco.com}

\bibitem{3a}  
A.~Asadi, Q.~Wang, and V.~Mancuso, ``A survey on device-to-device
communication in cellular networks,” \emph{IEEE Commun. Surveys Tuts.,}
vol. 16, no. 4, pp. 1801-1819, 4th Quart., 2014.

\bibitem{2}
L.~Lei, Z.~Zhong, C.~Lin, and X.~Shen, ``Operator controlled device-to-device communications in LTE-advanced networks," \emph{IEEE
Wireless Commun.,} vol. 19, no. 3, pp. 96-104, Jun. 2012.

\bibitem{2a} 
F.~Boccardi, R.~W. Heath, A.~Lozano, T. L. Marzetta, and P.~Popovski,
``Five disruptive technology directions for 5G,” \emph{IEEE Commun. Mag.,} vol. 52, no. 2, pp. 74-80, Feb. 2014.

\bibitem{27}
J. G. Andrews et al., ``What Will 5G Be?,” in \emph{IEEE Jou. on S. Areas in Commu.}, vol. 32, no. 6, pp. 1065-1082, June 2014.

\ignore{\bibitem{6a} 
S.~F., \textit{Radio frequency channel coding made easy}. Cham,
Switzerland: Springer, 2016, doi: 10.1007/978-3-319-21170-1.}

\bibitem{RC}
A.~Shokrollahi, ``Raptor codes,” \emph{IEEE Trans. Inf. Theory,} vol. 52, no. 6, pp. 2551-2567, Jun. 2006.

\bibitem{RLNC} 
T.~Ho et~al., ``A random linear network coding approach to multicast,”
\emph{IEEE Trans. Inf. Theory,} vol. 52, no. 10, pp. 4413-4430, Oct. 2006.

\bibitem{IDNC} 
D.~Traskov, M.~Medard, P.~Sadeghi, and R. Koetter, ``Joint scheduling and instantaneously decodable network coding,” in \emph{Proc. IEEE
Global Telecommun. Conf. (GLOBECOM), Honolulu, Hawaii, USA,}
Nov./Dec. 2009, pp. 1-6.

\bibitem{12m}
S.~Sorour and S.~Valaee, ``Completion delay minimization for instantly decodable network codes,” \emph{IEEE/ACM Trans. Netw.,} vol. 23, no. 5, pp. 1553-1567, Oct. 2015.

\bibitem{13m} 
S.~Sorour and S.~Valaee, ``Minimum broadcast decoding delay for generalized instantly decodable network coding,” in \emph{Proc. IEEE GLOBECOM,}
Miami, FL, USA, pp. 1-5., Dec. 2010. 

\bibitem{14m}
P.~Sadeghi, R.~Shams, and D.~Traskov, ``An optimal adaptive network coding scheme for minimizing decoding delay in broadcast erasure channels,” \emph{EURASIP J. Wireless Commun. Netw.,} vol. 2010, pp. 1-14, 2010.

\bibitem{15mm} 
S.~Sorour and S.~Valaee, ``On minimizing broadcast completion delay for instantly decodable network coding,” in \emph{Proc. IEEE Int. Conf. Commun.,} May, 2010, pp. 1-5.

\bibitem{16m} 
A.~Douik, S.~Sorour, M.-S.~Alouini, and T. Y. Al-Naffouri, ``Completion time reduction in instantly decodable network coding through decoding delay control,” in \emph{Proc. IEEE Glob. Telecommun. Conf.,} Dec. 2014, pp. 5008-5013. 

\bibitem{17m} 
L.~Lu, M.~Xiao, and L.~K.~Rasmussen, ``Design and analysis of relayaided broadcast using binary network codes," \emph{J. Commun.,} vol. 6, no. 8, pp. 610-617, 2011.

\bibitem{18m} 
E.~Drinea, C.~Fragouli, and L.~Keller, ``Delay with network coding and feedback,” in \emph{Proc. IEEE ISIT, Seoul, South Korea,} Jun. 2009, pp. 844-848.

\bibitem{20m} 
X.~Li, C.-C. Wang, and X. Lin, ``On the capacity of immediately decodable coding schemes for wireless stored-video broadcast with hard deadline constraints,” \emph{IEEE J. Sel. Areas Commun.,} vol. 29, no. 5, pp. 1094-1105, May, 2011.

\ignore{\bibitem{21m}
X.~Li, C.-C. Wang, and X.~Lin, ``Optimal immediately-decodable intersession network coding (IDNC) schemes for two unicast sessions with
hard deadline constraints,” in \emph{Proc. 49th Annu. Allerton Conf. Commun., Control, Comput.,} Monticello, IL, USA, Sep. 2011, pp. 784-791.

\bibitem{22m} 
M.~Muhammad, M.~Berioli, G.~Liva, and G.~Giambene, ``Instantly decodable network coding protocols with unequal error protection,” in \emph{Proc. IEEE Int. Conf. Commun.,} Jun. 2013, pp. 5120-5125.}

\bibitem{23m} 
\ignore{Y.~Liu and C.~W.~Sung, ``Quality-aware instantly decodable network coding,” \emph{IEEE Trans. Wireless Commun.,} vol. 13, no. 3, pp. 1604-1615, Mar. 2014.}M. S. Al-Abiad, A. Douik, and S. Sorour, ``Rate Aware Network Codes for Cloud Radio Access Networks," in IEEE Transactions on Mobile Computing, vol. 18, no. 8, pp. 1898-1910, 1 Aug. 2019.

\bibitem{24m} 
\ignore{Y.~Keshtkarjahromi, H.~Seferoglu, R. Ansari, and A.~Khokhar, ``Content aware instantly decodable network coding over wireless networks,”in \emph{Proc. Int. Conf. Comput., Netw. Commun. (ICNC),} Feb. 2015, pp. 803-809.}M. S. Al-Abiad, M. J. Hossain, and S. Sorour, ``Cross-Layer Cloud Offloading with Quality of Service Guarantees in Fog-RANs," in IEEE Transactions on Communications, Early Access, pp. 1-1, June 2019.

\bibitem{25m} 
M.~S. Karim, P.~Sadeghi, S.~Sorour, and N. Aboutorab, ``Instantly
decodable network coding for real-time scalable video broadcast over wireless networks,” \emph{EURASIP J. Adv. Signal Process.,} vol. 2016, no. 1, p. 1, Jan. 2016.

\bibitem{26m} 
N.~Aboutorab, and P.~Sadeghi ``Instantly decodable network coding for completion time or delay reduction in cooperative data exchange systems,” \emph{IEEE Trans. on Vehicular Tech.} Jul., 2013, pp. 3095-3099.

\bibitem{28m}
S.~E.~Tajbakhsh and P.~Sadeghi, ``Coded cooperative data
exchange for multiple unicasts,” in \emph{Proc. IEEE Inf. Theory Workshop,} Sep. 2012, pp. 587-591.

\bibitem{29m} 
A.~Douik, S.~Sorour, T.~Y.~Al-Naffouri, H.-C. Yang, and M.-S. Alouini, ``Delay reduction in multi-hop device-to-device communication using
network coding,” \emph{IEEE Trans. Wireless Commun.,} vol. 17, no. 10, Oct. 2018. 

\bibitem{30m} 
A.~Douik, and S.~Sorour, ``Data dissemination using instantly decodable binary codes in fog radio access networks,” \emph{IEEE Trans. on Commun.,} vol. 66, no. 5, pp. 2052-2064, May 2018.

\bibitem{31m}
A.~Douik, S.~Sorour, T. Y.~Al-Naffouri, and M.-S. Alouini, ``Instantly
decodable network coding: From centralized to device-to-device
communications,” \emph{IEEE Commun. Surveys Tuts.,} vol. 19, no. 2,
pp. 1201-1224, 2nd Quart., 2017.

\bibitem{33m} 
A. Douik, S. Sorour, H. Tembine, T. Y. Al-Naffouri, and M.-S. Alouini ``A game-theoretic framework for decentralized cooperative data exchange
using network coding,” \emph{IEEE Trans. Mobile Comput.,} vol. 16, no. 4, pp. 901-917, Apr. 2017.

\bibitem{33e} 
A. Douik, S. Sorour, H. Tembine, T. Y. Al-Naffouri, and M.-S. Alouini ``A game theoretic approach to minimize the completion time of network coded cooperative data exchange,” \emph{IEEE Global Communications Conference,} Austin, TX, 2014, pp. 1583-1589. Apr. 2017.

\ignore{\bibitem{24a}
C.~Fragouli, D.~Lun, M.~Medard, and P.~Pakzad, ``On feedback for network coding,” in \emph{Proc. 41st Annu. Conf. Inf. Sci. Syst.,} March 2007, pp. 248-252.

\bibitem{24b} 
L.~Keller, E.~Drinea, and C.~Fragouli, ``Online broadcasting with network coding,” in \emph{Proc. IEEE 4th Workshop Netw. Coding Theory Appl.,} January 2008, pp. 1-6.}

\bibitem{25} 
R. B. Myerson, ``Game Theory, Analysis of Conflict," Cambridge, MA, USA: Harvard University Press, Sep. 1991.

\ignore{\bibitem{25a} 
G.~Owen, ``Game theory," \emph{3rd edition.} London, UK: Academic Press,
Oct. 1995.}

\bibitem{25aa} 
W.~Saad, Z.~Han, M.~Debbah, Are H., and T. Basar, ``Coalition
game theory for communication networks: A tutorial,” \emph{IEEE Signal Processing Mag., Special issue on Game Theory in Sig. Pro.
and Com.,} vol. 26, no. 5, pp. 77-97, Sep. 2009.

\ignore{\bibitem{25aaa} 
D.~Ray, ``A Game-Theoretic Perspective on Coalition Formation.” New
York, USA: Oxford University Press, Jan. 2007.}

\bibitem{N1} 
W.~Saad, Z.~Han, M.~Debbah, ~and ~Are Hjørungnes, ``A distributed coalition
formation framework for fair user cooperation in wireless networks," \emph{IEEE Trans. Wireless Commun.,} vol. 8, no. 9, pp. 4580-4593, Sep. 2009.

\bibitem{26}
K.~Apt and~A.~Witzel, ``A generic approach to coalition formation (extended version),” in \emph{Int. Game Theory Rev.,} vol. 11, no. 3, pp. 347-367, Mar. 2009.

\bibitem{42a}
K.~Akkarajitsakul,~E.~Hossain, and~D. Niyato, ``Coalition-based cooperative packet delivery under uncertainty: A dynamic bayesian coalitional Game,” \emph{IEEE Trans. Mobile Comput.,} vol. 12, no. 2, pp. 371-385, Feb. 2013.

\bibitem{N2} 
L.~Militano et al., ``A constrained coalition formation game for multihop
D2D content uploading,’’ \emph{IEEE Trans. Wireless Commun.,} vol. 15, no. 3, pp. 2012-2024, Mar. 2016.

\bibitem{N3}
P. Sadeghi, R. A. Kennedy, P. B. Rapajic and R. Shams, ``Finite-state Markov modeling of fading channels- a survey of principles and applications," in \emph{IEEE Signal Processing Magazine,} vol. 25, no. 5, pp. 57-80, Sept. 2008.

\ignore{\bibitem{26AA}
K.~Apt and T.~Radzik, ``Stable partitions in coalitional games,” 2006. Available: http://arxiv.org/abs/cs.GT/0605132.}

\ignore{\bibitem{43}
G.~Demange and M.~Wooders, ``Group Formation in Economics.” New
York, USA: Cambridge University Press, 2006.}

\ignore{\bibitem{44}
A.~Bogomonlaia and M.~Jackson, ``The stability of hedonic coalition
structures,” \emph{Games and Economic Behavior,} vol. 38, pp. 201-230, Jan.
2002.}

\bibitem{44}
K. Apt and T. Radzik, ``Stable partitions in coalitional games,” 2006. Available: http://arxiv.org/abs/cs/0605132

\ignore{\bibitem{45}
E.~Diamantoudi and L.~Xue, ``Farsighted stability in hedonic games,” \emph{Social Choice Welfare,} vol. 21, pp. 39-61, Jan. 2003.}

\ignore{\bibitem{44a}
C. S. R. Murthy and B. Manoj, \emph{Ad Hoc Wireless Networks: Architectures
and Protocols.} New Jersey, NY, USA: Prentice Hall, Jun. 2004.}

\bibitem{44aa} 
Z. Han and K. J. Liu, \emph{Resource Allocation for Wireless Networks: Basics, Techniques, and Applications.} Cambridge University Press, 2008.

\ignore{\bibitem{45aa}
G.~J.~Woe., ``Exact algorithms for NP-hard problems: a survey,”
\emph{Combinatorial Optimization,} vol. 2570, pp. 185-207, 2003.}

\ignore{\bibitem{46}
E.~Kou. and C.~Pap., ``Worst-case equilibria,” in
\emph{Proc. 16th Annu. Symp. Theor. Asp. Comput. Sci.,} 1999, pp. 404-413.}

\ignore{\bibitem{47m}
M.~R. Gary and D.~S. Johnson, ``Computers and intractability: A guide to the theory of np-completeness," 1979.

\bibitem{48m}
Z.~Akbari, ``A polynomial-time algorithm for the maximum clique problem," in  \emph{Proc. of IEEE/ACIS 12th International Conference on Computer and  Information Science (ICIS' 2013), Niigata, Japan}, pp. 503--507, June 2013.

\bibitem{49m}
V.~Vassilevska, ``Efficient algorithms for clique problems," \emph{Inf.
  Process. Lett.}, vol. 109, no.~4, pp. 254--257, 2009.}

\bibitem{50m}
B.~P.~Deosarkar, N.~S.~Yadav, and R. P. Yadav, ``Clusterhead selection
in clustering algorithms for wireless sensor networks: a survey,” \emph{Proc. 2008 IEEE International Conference on Computing, Communication and Networking,} pp. 1-8.

\bibitem{51m}
 A. A. Abbasi and M. Younis, ``A survey on clustering algorithms for
wireless sensor networks,” \emph{Computer Commun.,} vol. 30, pp. 2826-2841, 2007.

\end {thebibliography}

\end{document}